
\documentclass[11pt]{article}

\usepackage{hyperref,fullpage,amsmath,amsfonts,xspace,amssymb,amsthm}

\newtheorem{Thm}{Theorem}
\newtheorem{Lem}[Thm]{Lemma}
\newtheorem{Cor}[Thm]{Corollary}

\newtheorem{Def}{Definition}
\newtheorem{Fact}{Fact}

\newcommand\mbR{\mbox{$\mathbb{R}$}}
\newcommand\mbC{\mbox{$\mathbb{C}$}}
\newcommand\mcX{\mathcal{X}}
\newcommand\mcY{\mathcal{Y}}
\newcommand\mcW{\mathcal{W}}
\newcommand\mcZ{\mathcal{Z}}
\newcommand\h{\mathcal{H}}

\newcommand {\ie} {\textit{i.e.}\xspace}
\newcommand {\st} {\textit{s.t.}\xspace}

\newcommand\defeq{\stackrel{\mathrm{\scriptsize def}}{=}}
\newcommand\pr{\mbox{\bf Pr}}
\newcommand\av{\mbox{\bf{\bf E}}}

\newcommand\vecinv{\mathsf{vecinv}}

\newcommand\ket[1]{| #1 \rangle}
\newcommand\bra[1]{\langle #1 |}
\newcommand\qip[2]{\langle #1 | #2 \rangle}
\newcommand\suppress[1]{}

\newcommand\alice{\mbox{\sf Alice}\xspace}
\newcommand\bob{\mbox{\sf Bob}\xspace}
\newcommand\rcomm{\mbox{\sf {RComm}}\xspace}
\newcommand\qcomm{\mbox{\sf {QComm}}\xspace}

\newcommand\qcorr{\mbox{\sf {QCorr}}\xspace}

\newcommand\Q{\mbox{\sf {Q}}\xspace}
\newcommand\R{\mbox{\sf {R}}\xspace}
\newcommand\F{\mbox{\sf {F}}\xspace}

\newcommand\good{\mbox{\sf {Good}}\xspace}

\newcommand\rank{\mbox{\tt {rank}}\xspace}
\newcommand\srank{\mbox{\tt {S-rank}}\xspace}
\newcommand\prank{\mbox{\tt {rank}$_{\tt psd}$}\xspace}
\newcommand\tr{\mbox{\tt {tr}}\xspace}

\begin{document}

\begin{titlepage}

\title{\bf Correlation/Communication complexity of generating bipartite states}
\author{Rahul Jain\thanks{Department of Computer Science and Centre for Quantum Technologies, National University of Singapore, 2 Science Drive 3, Singapore 117542. Email: rahul@comp.nus.edu.sg}  \qquad
Yaoyun Shi\thanks{Department of Electrical Engineering and Computer Science, 2260 Hayward Street, University of Michigan, Ann Arbor, Michigan, 48109-2121, USA. Email: shiyy@eecs.umich.edu} \qquad
Zhaohui Wei\thanks{Centre for Quantum Technologies, National University of Singapore, 2 Science Drive 3, Singapore 117542. Email: cqtwz@nus.edu.sg} \qquad Shengyu Zhang\thanks{Department of Computer Science and Engineering and The Institute of Theoretical Computer Science and Communications, The Chinese University of Hong Kong. Email: syzhang@cse.cuhk.edu.hk}}
\date{}

\maketitle
\thispagestyle{empty}

\begin{abstract}
We study the correlation complexity (or equivalently, the communication complexity) of generating a bipartite quantum state $\rho$. When $\rho$ is a pure state, we completely characterize the complexity for approximately generating $\rho$ by a corresponding approximate rank, closing a 
gap left in Ambainis, Schulman, Ta-Shma, Vazirani and Wigderson ({\em SIAM Journal on Computing}, 32(6):1570-1585, 2003). When $\rho$ is a classical distribution $P(x,y)$, we tightly characterize the complexity of generating $P$ by the psd-rank, a measure recently proposed by Fiorini, Massar, Pokutta, Tiwary and de Wolf ({\em STOC} 2012). We also present a characterization of the complexity of generating a general quantum state $\rho$. 
\end{abstract}

\end{titlepage}

\pagestyle{plain}
\pagenumbering{arabic}

\section{Introduction}
In \cite{Zha12}, the following basic model was studied: Two parties, called \alice and \bob, aim to generate a target bipartite state $\rho  \in \h_A \otimes \h_B$ (Hilbert space $\h_A$ is in possession of \alice and Hilbert space $\h_B$ is in possession of \bob) using local quantum operations on a shared  {\em seed} state $\sigma \in \h_A \otimes \h_B$. The minimum {\em size}\footnote{The size of a quantum state $\sigma$ is defined to be half of the number of qubits of $\sigma$.} of this seed state is the \emph{quantum correlation complexity} of $\rho$, denoted $\Q(\rho)$. Since \alice and \bob can always just share $\rho$ itself, $\Q(\rho)$ is at most the number of qubits of $\rho$, so the correlation complexity is a sublinear complexity measure. Let $\{\ket{x} ~|~ x \in [\dim(\h_A)]\}$ be the {\em computational bases} for $\h_A$ and let $\{\ket{y} ~|~ y \in [\dim(\h_B)]\}$ be the computational bases for $\h_B$. We call a state $\rho$ {\em classical} if its eigenvectors are the computational basis states $\{\ket{x} \otimes \ket{y} ~|~ x \in [\dim(\h_A)],  y \in [\dim(\h_B)] \}$. Equivalently, it is just a classical probability distribution on the computational bases of $\h_A\otimes \h_B$. For a classical state $\rho$, the minimum size of a classical seed state is the \emph{randomized correlation complexity} of $\rho$, denoted $\R(\rho)$.  The work \cite{Zha12} exhibited a classical state $\rho$ of size $n$ with $\R(\rho) \geq \log_2(n) $ and $\Q(\rho) = 1$.

Above we considered the model in which \alice and \bob  start with some shared state $\sigma$ and produce target state $\rho$ by doing only local operations and no communication. On the other hand, we also consider the model in which \alice and \bob start with some tensor state $\sigma_A \otimes \sigma_B$ and do some local operations and communication and produce $\rho$ at the end of their protocol.  The {\em quantum communication complexity} of $\rho$, denoted $\qcomm(\rho)$, is defined as the minimum number of qubits exchanged between \alice and \bob, such that at the end of their protocol they output $\rho$. Again, when $\rho$ is classical, one can also define the {\em  randomized communication complexity} of $\rho$, denoted $\rcomm(\rho)$, as the minimum number of bits exchanged  between \alice and \bob, such that at the end of their protocol they output $\rho$. In \cite{Zha12} it is shown that for any classical state $\rho$,  
\begin{equation*} 
	\rcomm(\rho) = \R(\rho) = \lceil \log_2 \rank_+(P) \rceil,
\label{eq:rcorr*}
\end{equation*}
where $\rank_+(P)$ is the nonnegative rank\footnote{The nonnegative rank of a nonnegative matrix $A$ is the smallest number $r$ such that $A = \sum_{i=1}^r A_i$ where each $A_i$ is a \emph{nonnegative} rank-1 matrix.} of the $\dim(\h_A) \times \dim(\h_B)$ matrix $P$ with $P(x,y) \defeq (\bra{x} \otimes \bra{y}) \rho (\ket{x} \otimes \ket{y})$. It turns out that for a general quantum state $\rho$ it holds that $\qcomm(\rho) = \Q(\rho)$ as well. This fact was attributed to Nayak (personal communication) in \cite{Zha12}, and we shall see the reason in  a later section. 

We have considered above two extreme models. Instead we can also consider the intermediate model where \alice and \bob start with some shared state $\sigma$ and communicate between them to finally produce the target state $\rho$. In this case we  count the size of $\sigma$ plus the communication as the resource used towards the complexity. Let us denote $\tilde{\Q}(\rho)$ to be the minimum resource used by any protocol which produces $\rho$. It is clearly seen that $\qcomm(\rho) \leq \tilde{\Q}(\rho) \leq \Q(\rho)$, and hence $\qcomm(\rho) = \tilde{\Q}(\rho) = \Q(\rho)$ since $\qcomm(\rho) = \Q(\rho)$.


\subsection*{Our results} In this paper, we conduct more studies on the fundamental question of bipartite state generation. We consider approximate versions of $\Q(\rho)$ defined as follows. Below $\F(\rho,\rho')$ represents the {\em fidelity} between $\rho$ and $\rho'$. 
\begin{Def} \label{def:qrho}
Let $\epsilon > 0$.  Let $\rho$ be a quantum state  in $\h_A \otimes \h_B$. Define
$$ \Q_\epsilon(\rho)    \defeq \min\{ \Q(\rho')  ~| ~ \F(\rho, \rho') \geq 1 - \epsilon; ~ \rho' \in \h_A \otimes \h_B\} .$$
$$ \Q^{pure}_\epsilon(\rho)    \defeq \min\{ \Q(\ket{\phi}\bra{\phi})  ~| ~ \F(\rho, \ket{\phi}\bra{\phi}) \geq 1 - \epsilon; ~ \ket{\phi} \in \h_A \otimes \h_B\} .$$
\end{Def}
In \cite{ASTS+03}, Ambainis, Schulman, Ta-Shma, Vazirani and Wigderson showed that for any {\em pure state} $\ket{\psi} = \sum_{x,y} a_{x,y}\ket{x}\otimes \ket{y}$, 
\[\lceil \log_2 \rank_{2\epsilon}(A) \rceil \leq \Q^{pure}_\epsilon(\ket{\psi}\bra{\psi}) \leq \lceil \log_2 \rank_{\epsilon}(A) \rceil.\]
Above $A$ is the $\dim(\h_A) \times \dim(\h_B)$ matrix with $ A(x,y)= a_{x,y}$ and 
$$\rank_\epsilon(A) \defeq \min\{\rank(B) ~|~ \|A-B\|_2^2 \leq \epsilon \}.$$ 
Using Lemma \ref{lem:rankl2} (as mentioned in the next section), one can easily construct a state $\ket{\psi}\in \mbC^n\otimes \mbC^n$ such that $\rank_{2\epsilon}(A) = 1$ but $\rank_{\epsilon}(A) = n/2$, making the above two bounds arbitrarily far from each other. In this paper we show the following tight characterization.
\begin{Thm}\label{res:pure} Let $\epsilon > 0$. Let $\{\ket{x} ~|~ x \in [\dim(\h_A)]\}$ be the computational bases for $\h_A$ and let $\{\ket{y} ~|~ y \in [\dim(\h_B)]\}$ be the computational bases for $\h_B$. Let $\ket{\psi} = \sum_{x,y} a_{x,y} \ket{x} \otimes \ket{y}$. Let $A$ be defined as $A(x,y) = a_{x,y}$. Then \[\Q_\epsilon(\ket{\psi}\bra{\psi}) = \Q^{pure}_\epsilon(\ket{\psi}\bra{\psi}) = \lceil \log_2 \rank_{2\epsilon-\epsilon^2}(A) \rceil.\]
\end{Thm}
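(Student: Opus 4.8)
The plan is to prove the two inequalities $\Q_\epsilon(\ket{\psi}\bra{\psi}) \ge \lceil \log_2 \rank_{2\epsilon-\epsilon^2}(A)\rceil$ and $\Q^{pure}_\epsilon(\ket{\psi}\bra{\psi}) \le \lceil \log_2 \rank_{2\epsilon-\epsilon^2}(A)\rceil$; together with the trivial $\Q_\epsilon \le \Q^{pure}_\epsilon$ (pure approximations are a special case of arbitrary ones) these squeeze all three quantities to the common value. The backbone is the exact identity $\Q(\ket{\phi}\bra{\phi}) = \lceil \log_2 \rank(A_\phi)\rceil$, where $A_\phi$ is the coefficient matrix of $\ket{\phi}$ so that $\rank(A_\phi)$ is the Schmidt rank: the upper bound holds because local channels applied to the maximally entangled state of local dimension $2^{\lceil \log_2 r\rceil}$ can produce any bipartite pure state of Schmidt rank $\le r$; the lower bound holds because a product operator $X\otimes Y$ never increases Schmidt rank, so if $(\mathcal{E}_A\otimes\mathcal{E}_B)(\sigma)=\ket{\phi}\bra{\phi}$ with $\sigma$ of size $k$, then writing $\sigma$ as a mixture of pure states $\ket{\sigma_m}$ (each of Schmidt rank $\le 2^k$) and expanding the channels into Kraus operators $\{A_j\},\{B_l\}$ forces, by extremality of pure states, $\ket{\phi}\propto(A_j\otimes B_l)\ket{\sigma_m}$ for some $j,l,m$, hence Schmidt rank $\le 2^k$.

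For the upper bound on $\Q^{pure}_\epsilon$, take $B$ to be the rank-$r$ truncated singular value decomposition of $A$, where $r:=\rank_{2\epsilon-\epsilon^2}(A)$. By Eckart--Young this is the best rank-$r$ Frobenius approximation, so $\|A-B\|_2^2=\sum_{i>r}\sigma_i(A)^2\le 2\epsilon-\epsilon^2$ and hence $\|B\|_2^2=\|A\|_2^2-\|A-B\|_2^2\ge 1-(2\epsilon-\epsilon^2)=(1-\epsilon)^2$ (using $\|A\|_2=1$). Let $\ket{\phi}$ be the normalization of the bipartite vector with coefficient matrix $B$; its Schmidt rank is $\rank(B)=r$, and the SVD gives $\langle\psi|(\text{unnormalized }\phi)\rangle=\langle A,B\rangle=\|B\|_2^2$, so $|\qip{\psi}{\phi}|=\|B\|_2\ge 1-\epsilon$, i.e.\ $\F(\ket{\psi}\bra{\psi},\ket{\phi}\bra{\phi})=|\qip{\psi}{\phi}|\ge 1-\epsilon$. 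Thus $\Q^{pure}_\epsilon(\ket{\psi}\bra{\psi})\le\lceil\log_2 r\rceil$. Running the same SVD computation in reverse yields the key characterization used in both directions: the minimum Schmidt rank of a \emph{unit} vector $\ket{\phi}$ with $|\qip{\psi}{\phi}|\ge 1-\epsilon$ is exactly $\rank_{2\epsilon-\epsilon^2}(A)$. Indeed, for any unit $C$ of rank $\le s$ one has $\|A-tC\|_2^2=1-2t\,\mathrm{Re}\langle A,C\rangle+t^2\ge \min_{\rank\le s}\|A-\cdot\|_2^2=\sum_{i>s}\sigma_i(A)^2$, and optimizing over $t$ and the phase gives $|\langle A,C\rangle|^2\le\sum_{i\le s}\sigma_i(A)^2$; so overlap $\ge 1-\epsilon$ forces $\sum_{i>s}\sigma_i(A)^2\le 2\epsilon-\epsilon^2$, which by Eckart--Young happens iff $s\ge\rank_{2\epsilon-\epsilon^2}(A)$.

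For the lower bound on $\Q_\epsilon$, let $\rho'$ be any state with $\F(\ket{\psi}\bra{\psi},\rho')\ge 1-\epsilon$, equivalently $\bra{\psi}\rho'\ket{\psi}\ge(1-\epsilon)^2$, achieving $\Q(\rho')=k$ via local channels $\mathcal{E}_A,\mathcal{E}_B$ on a seed $\sigma$ of size $k$. Write $\sigma=\sum_m p_m\ket{\sigma_m}\bra{\sigma_m}$ with each $\ket{\sigma_m}$ of Schmidt rank $\le 2^k$ across the seed's bipartition, and expand $\mathcal{E}_A,\mathcal{E}_B$ into Kraus operators $\{A_j\},\{B_l\}$. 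Then $\bra{\psi}\rho'\ket{\psi}=\sum_{m,j,l}p_m\,|\qip{\psi}{(A_j\otimes B_l)\sigma_m}|^2$, and since this is an average with weights $p_m\|(A_j\otimes B_l)\ket{\sigma_m}\|^2$ summing to $1$ (trace preservation), there is a triple $(m^*,j^*,l^*)$ for which the normalized vector $\ket{\phi}$ proportional to $(A_{j^*}\otimes B_{l^*})\ket{\sigma_{m^*}}$ satisfies $|\qip{\psi}{\phi}|^2\ge\bra{\psi}\rho'\ket{\psi}\ge(1-\epsilon)^2$. Being a product operator applied to a Schmidt-rank-$\le 2^k$ vector, $\ket{\phi}$ has Schmidt rank $\le 2^k$; by the characterization above, any such $\ket{\phi}$ with $|\qip{\psi}{\phi}|\ge 1-\epsilon$ has Schmidt rank $\ge\rank_{2\epsilon-\epsilon^2}(A)$, so $2^k\ge\rank_{2\epsilon-\epsilon^2}(A)$, i.e.\ $\Q(\rho')\ge\lceil\log_2\rank_{2\epsilon-\epsilon^2}(A)\rceil$; minimizing over $\rho'$ completes the argument.

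The main obstacle is precisely this lower-bound step: converting an arbitrary \emph{mixed} low-complexity approximation $\rho'$ into a \emph{pure}, low-Schmidt-rank one without sacrificing fidelity — this is what was missing in the earlier analysis. The resolution is the two-stage averaging (over the pure-state decomposition of the seed, then over Kraus pairs of the two local channels), leaning on two elementary facts: $\bra{\psi}\rho'\ket{\psi}$ is \emph{exactly} the weighted sum of the squared overlaps $|\qip{\psi}{\cdot}|^2$ of the resulting unnormalized vectors, so the best normalized one is at least as good; and local product operators never raise Schmidt rank. The only other point needing care is the arithmetic $1-(1-\epsilon)^2=2\epsilon-\epsilon^2$ that links the fidelity threshold (fidelity measured as $|\qip{\psi}{\phi}|$, not its square) to the Frobenius error budget, which is the source of the parameter in the statement.
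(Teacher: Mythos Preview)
Your proof is correct, and for the pure-approximation direction ($\Q^{pure}_\epsilon = \lceil\log_2\rank_{2\epsilon-\epsilon^2}(A)\rceil$) it is essentially the paper's argument: both truncate the SVD for the upper bound and use Eckart--Young for the lower bound (your quadratic-in-$t$ trick and the paper's direct $\|A-B\|_2^2 = 1 - \beta^2$ computation in Lemma~\ref{lem:appsrank} are two phrasings of the same inequality).

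The genuine difference is in the harder step, showing $\Q_\epsilon(\ket{\psi}\bra{\psi}) \ge \lceil\log_2\rank_{2\epsilon-\epsilon^2}(A)\rceil$. The paper proceeds via purifications: by Lemma~\ref{lem:qrho} a state $\rho'$ with $\Q(\rho')=k$ admits a purification $\ket{\phi}$ of Schmidt rank $\le 2^k$; Uhlmann's theorem then yields a purification $\ket{\psi}\otimes\ket{\theta}$ of $\ket{\psi}\bra{\psi}$ with $|\qip{\phi}{\psi\otimes\theta}|\ge 1-\epsilon$; finally a separate monotonicity lemma (Lemma~\ref{lem:srankmonotone}) shows $\srank_\epsilon(\ket{\psi}\otimes\ket{\theta})\ge\srank_\epsilon(\ket{\psi})$, closing the loop. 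You instead stay on the channel side: decompose the seed into pure states, expand both local channels into Kraus operators, and observe that $\bra{\psi}\rho'\ket{\psi}$ is a convex combination (with weights $p_m\|(A_j\otimes B_l)\ket{\sigma_m}\|^2$ summing to~$1$ by trace preservation) of the squared overlaps of the normalized outputs with $\ket{\psi}$; one such output therefore already achieves overlap $\ge 1-\epsilon$, and product operators cannot raise its Schmidt rank above that of the seed component. Your route is more elementary in that it sidesteps both Uhlmann's theorem and the tensor-monotonicity lemma, at the cost of opening up the channel into its Kraus form; the paper's route is cleaner in that it never touches the internal structure of the protocol, but it needs those two extra ingredients. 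Both arrive at exactly the same characterization.
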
 
Our result not only improves the bounds in~\cite{ASTS+03} to optimal, but also shows that allowing a \emph{mixed} state to approximate a pure state $\ket{\psi}$ does not help, for any $\ket{\psi}$ and any approximation ratio $\epsilon$.

\medskip
Our second result is for the case of a classical state $\rho$. Previously \cite{Zha12} gave upper and lower bounds: 
$$\frac{1}{4}\log_2 \rank(P) \leq \Q(\rho) \leq \min_{Q:\ Q\circ \bar Q = P} \log_2 \rank(Q).$$
Above $P$ is given by $P(x,y) = (\bra{x} \otimes \bra{y}) \rho (\ket{x} \otimes \ket{y})$ and $\circ$ is the Hadamard (i.e. entry-wise) product of matrices. How tight these bounds are is not clear yet, and an open question asked in \cite{Zha12} was a characterization of $\Q(\rho)$. In this paper, we answer this question by showing a tight characterization in terms of {\em psd-rank} of $P$, a concept recently proposed in \cite{FMP+12} by Fiorini, Massar, Pokutta, Tiwary and de Wolf. For a nonnegative matrix $P$, its psd-rank, denoted $\prank(P)$, is the minimum $r$ such that there are $r \times r$ positive semi-definite matrices $C_x$, $D_y$, satisfying that $P(x,y) = \tr ( C_x  D_y) $. We show the following result.
\begin{Thm}\label{res:distribution}
Let $\{ \ket{x} ~|~ x \in [\dim(\h_{A})]\}$ be the computational bases for $\h_A$ and let $\{ \ket{y}~ |~ y \in [\dim(\h_{B})]\}$ be the computational bases for  $\h_B$. Let 
$$ \rho =   \sum_{x\in  [\dim(\h_{A})] \atop y\in  [\dim(\h_{B})]} p_{x,y} \cdot \ket{x}\bra{x} \otimes \ket{y}\bra{y} \enspace .$$
Let $P$ be a $[\dim(\h_{A})] \times [\dim(\h_{B})]$ matrix with $P(x,y) = p_{x,y}$. Then $Q(\rho) = \lceil \log_2 \prank(P) \rceil$.
\end{Thm}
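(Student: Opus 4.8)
The plan is to prove the two inequalities $\Q(\rho)\le\lceil\log_2\prank(P)\rceil$ and $\Q(\rho)\ge\lceil\log_2\prank(P)\rceil$ separately: the first by converting an optimal psd-factorization of $P$ into a state-generation protocol, the second by extracting a psd-factorization out of an arbitrary protocol. Throughout, recall that a seed of size $s$ is a bipartite qubit state, so its two registers have dimensions $2^{s_A}$ and $2^{s_B}$ with $s=(s_A+s_B)/2$.

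\emph{Lower bound.} Consider any protocol that produces $\rho$ from a shared seed $\sigma$ on $\mathbb{C}^{d_A}\otimes\mathbb{C}^{d_B}$ in which Alice holds $s_A=\log_2 d_A$ qubits and Bob holds $s_B=\log_2 d_B$ qubits, with local channels written in Kraus form as $\xi\mapsto\sum_k A_k\xi A_k^\dagger$ and $\xi\mapsto\sum_l B_l\xi B_l^\dagger$, so that $\sum_{k,l}(A_k\otimes B_l)\,\sigma\,(A_k\otimes B_l)^\dagger=\rho$. Put $M_x\defeq\sum_k A_k^\dagger\ket{x}\bra{x}A_k$ and $N_y\defeq\sum_l B_l^\dagger\ket{y}\bra{y}B_l$; these are positive semidefinite, and trace-preservation of the channels makes $\{M_x\}$ and $\{N_y\}$ POVMs on the two seed registers. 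A direct computation using cyclicity of the trace gives $p_{x,y}=\tr[(M_x\otimes N_y)\sigma]$, and contracting the $B$-register gives $p_{x,y}=\tr(M_x\widehat\sigma_y)$ with $\widehat\sigma_y\defeq\tr_B[(I\otimes N_y)\sigma]\succeq 0$, a positive semidefinite operator on the $A$-register. This is a psd-factorization of $P$ by matrices of order $d_A$, so $\prank(P)\le d_A=2^{s_A}$, and symmetrically $\prank(P)\le 2^{s_B}$. Since $s_A,s_B\in\mbZ$, both are at least $\lceil\log_2\prank(P)\rceil$, hence so is the size $(s_A+s_B)/2$ of $\sigma$.

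\emph{Upper bound.} Start from an optimal factorization $p_{x,y}=\tr(C_xD_y)$ with $r\defeq\prank(P)$ and $C_x,D_y\succeq 0$ of order $r$, and put it into protocol-ready form. Passing to the support of $C\defeq\sum_x C_x$ (all $C_x$ vanish outside it), we may assume $C$ invertible; the congruence $C_x\mapsto C^{-1/2}C_xC^{-1/2}$, $D_y\mapsto C^{1/2}D_yC^{1/2}$ preserves every $\tr(C_xD_y)$, turns $\{C_x\}$ into a POVM, and makes $\tau\defeq\sum_yD_y$ a density matrix since $\tr\tau=\sum_{x,y}p_{x,y}=1$; restricting further to the support of $\tau$ (which contains every $D_y$, and onto which we also project the $C_x$), the operators $F_y\defeq\tau^{-1/2}D_y\tau^{-1/2}$ form a POVM and $D_y=\tau^{1/2}F_y\tau^{1/2}$. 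As seed take the purification $\ket{\eta}\defeq(\tau^{1/2}\otimes I)\sum_i\ket{i}\ket{i}$ of $\tau$, written in the eigenbasis of $\tau$; it sits on two registers of dimension at most $r$, so has size at most $\lceil\log_2 r\rceil$. In the protocol Alice measures her register with $\{C_x\}$ and writes the outcome as $\ket{x}\bra{x}$ in her output register, while Bob measures with $\{F_y^T\}$ (transpose in the fixed $\tau$-eigenbasis) and outputs $\ket{y}\bra{y}$. Using the identity $\bra{\widetilde\Phi}(X\otimes Y)\ket{\widetilde\Phi}=\tr(XY^T)$ for $\ket{\widetilde\Phi}=\sum_i\ket{i}\ket{i}$, the outcome $(x,y)$ has probability $\tr(\tau^{1/2}C_x\tau^{1/2}F_y)=\tr(C_xD_y)=p_{x,y}$, so the joint output is exactly $\rho$ and $\Q(\rho)\le\lceil\log_2 r\rceil$.

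\emph{Main obstacle.} Once everything is written via Kraus operators the lower bound is essentially forced, so the real work is the upper bound, and its crux is the normalization step: the congruence $C_x\mapsto C^{-1/2}C_xC^{-1/2}$, $D_y\mapsto C^{1/2}D_yC^{1/2}$ is exactly what turns an abstract psd-factorization into a POVM together with a state to be purified into the seed --- the quantum analogue of $\ell_1$-normalizing the factor vectors in the classical $\rank_+$ characterization of \cite{Zha12}. The care that is genuinely needed is that $C$, and then $\tau$, may be singular, so one must pass to the appropriate supports before inverting, and that one must fix a single basis throughout so the transposes in the entanglement identity are consistent; one must also verify that the constructed maps are legitimate quantum channels and that the output equals $\rho$ exactly rather than approximately. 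None of this is deep, but it is where a careless argument would break.
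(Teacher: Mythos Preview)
Your proof is correct, but takes a genuinely different route from the paper's. The paper first reduces, via Lemma~\ref{lem:qrho}, the quantity $\Q(\rho)$ to the minimum Schmidt rank of a purification of $\rho$, and then moves back and forth between purifications and psd-factorizations directly: for the upper bound it builds a purification of Schmidt rank $r$ by letting the $i$-th Schmidt vectors be $\sum_x\ket{x}\ket{x}\ket{v_x^i}$ and $\sum_y\ket{y}\ket{y}\ket{w_y^i}$, where $\ket{v_x^i},\ket{w_y^i}$ are the $i$-th columns of $\sqrt{C_x^T},\sqrt{D_y}$; for the lower bound it reads off a psd-factorization as the Gram matrices $C_x(j,i)=\qip{v_x^j}{v_x^i}$, $D_y(i,j)=\qip{w_y^j}{w_y^i}$ from any Schmidt decomposition of a minimal purification. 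Your argument is instead fully operational: the lower bound extracts a psd-factorization from the Kraus form of an arbitrary protocol (bypassing Lemma~\ref{lem:qrho} entirely), and the upper bound massages the psd-factorization into a POVM-plus-state via the congruence by $C^{-1/2}$ and $\tau^{-1/2}$, then runs an explicit measurement protocol on a canonical purification of $\tau$. The paper's route is slicker (no normalization, no support gymnastics, no transposes) and dovetails with the general-state characterization in Theorem~\ref{res:qrho}; your route is more self-contained, produces an explicit protocol that one can actually run, and makes transparent the analogy with the $\rank_+$ argument of~\cite{Zha12}.
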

Along with the characterization $\R(\rho) =  \lceil \log_2 \rank_+(P) \rceil$ (shown in ~\cite{Zha12}), 
it is interesting to see that for classical states $\rho$, randomized correlation/communication complexity is all about nonnegative rank, and the quantum correlation/communication complexity is all about the psd-rank of the corresponding matrix $P$.

For a general quantum state $\rho$ we show the following characterization of $\Q(\rho)$.
\begin{Thm}\label{res:qrho}
Let $\rho$ be a quantum state  in $\h_A \otimes \h_B$.  Let $\{ \ket{x} ~|~ x \in [\dim(\h_{A})]\}$ be the computational bases for $\h_A$ and let $\{ \ket{y}~ |~ y \in [\dim(\h_{B})]\}$ be the computational bases for  $\h_B$. Then $\Q(\rho) = \lceil \log_2 r \rceil $ where $r$ is the minimum number such that there exist  matrices $\{A_x ~|~ x \in  [\dim(\h_{A})] \}$ and $\{B_y ~|~ y \in  [\dim(\h_{B})] \}$, each with $r$ columns, and 
\begin{align*}
\rho =  \sum_{x,x'\in  [\dim(\h_{A})]  \atop y,y'\in  [\dim(\h_{B})]} \ket{x}\bra{x'} \otimes \ket{y}\bra{y'} \cdot \tr \Big( (A_{x'}^\dag A_x)^T (B_{y'}^\dag B_y)  \Big) \enspace .
\end{align*}
\end{Thm}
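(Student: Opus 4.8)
The plan is to prove the two inequalities $\Q(\rho)\le\lceil\log_2 r\rceil$ and $\Q(\rho)\ge\lceil\log_2 r\rceil$ separately, in each case by translating between the algebraic data $\{A_x\},\{B_y\}$ and a bipartite purification of $\rho$ whose two halves Alice and Bob hold.

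For the upper bound I would start from matrices $\{A_x\}$, $\{B_y\}$ with $r$ columns realizing the stated identity, and read the $i$-th columns of $A_x$ and of $B_y$ as vectors $\ket{\alpha^i_x}$ and $\ket{\beta^i_y}$ in ancilla spaces $\h_{E_A}$ and $\h_{E_B}$. Setting $\ket{\Psi}:=\sum_{i=1}^r\bigl(\sum_x\ket{x}_A\ket{\alpha^i_x}_{E_A}\bigr)\otimes\bigl(\sum_y\ket{y}_B\ket{\beta^i_y}_{E_B}\bigr)$, a partial-trace computation — the one place where the transpose and the precise index order inside $\tr\bigl((A_{x'}^\dag A_x)^T(B_{y'}^\dag B_y)\bigr)$ have to be tracked carefully — should give $\tr_{E_AE_B}(\ket{\Psi}\bra{\Psi})=\rho$, hence also $\|\ket{\Psi}\|=1$ since $\tr(\rho)=1$. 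Since $\ket{\Psi}$, being a sum of $r$ product vectors, is then a bipartite pure state of Schmidt rank at most $r$ across the cut $(\h_A\otimes\h_{E_A})\mid(\h_B\otimes\h_{E_B})$, the elementary exact pure-state construction (the $\epsilon=0$ case of \cite{ASTS+03}: use $\sum_i\lambda_i\ket{i}_{S_A}\ket{i}_{S_B}\in\mbC^r\otimes\mbC^r$ for the Schmidt coefficients $\lambda_i$ of $\ket{\Psi}$, then rotate into its Schmidt bases by local isometries) produces $\ket{\Psi}$ from a seed of size $\lceil\log_2 r\rceil$. Appending the local steps ``Alice discards $E_A$, Bob discards $E_B$'' yields $\rho$, so $\Q(\rho)\le\lceil\log_2 r\rceil$.

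For the lower bound I would take an arbitrary protocol generating $\rho$ from a seed $\sigma\in\h_{S_A}\otimes\h_{S_B}$. By Stinespring dilation I may assume Alice applies an isometry $W_A:\h_{S_A}\to\h_A\otimes\h_{E_A}$ and then discards $E_A$, and likewise $W_B:\h_{S_B}\to\h_B\otimes\h_{E_B}$ for Bob; and by purifying $\sigma$ with a reference $\h_C$ that is never acted upon (hence traced out at the end), I may take the seed pure, $\ket{\sigma}_{S_AS_BC}$. With $\ket{\Psi}:=(W_A\otimes W_B\otimes I_C)\ket{\sigma}$ one has $\rho=\tr_{E_AE_BC}(\ket{\Psi}\bra{\Psi})$, and since local isometries do not change Schmidt rank, $\ket{\Psi}$ has Schmidt rank $d\le\dim\h_{S_A}$ across $(\h_A\otimes\h_{E_A})\mid(\h_B\otimes\h_{E_B}\otimes\h_C)$. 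Expanding the Schmidt vectors of $\ket{\Psi}$ in the computational bases of $\h_A,\h_B$ tensored with the ancillas defines matrices $A_x,B_y$ with $d$ columns, and the same partial-trace identity, read backwards, shows they realize the stated formula for $\rho$; thus $\rho$ has a representation of dimension $\le\dim\h_{S_A}$, and — grouping $\h_C$ with Alice's side instead — also one of dimension $\le\dim\h_{S_B}$. Hence $r\le\min(\dim\h_{S_A},\dim\h_{S_B})$, so each of $\h_{S_A},\h_{S_B}$ has at least $\lceil\log_2 r\rceil$ qubits and the size of $\sigma$ is at least $\lceil\log_2 r\rceil$; combined with the upper bound this gives $\Q(\rho)=\lceil\log_2 r\rceil$.

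The main obstacle will be organizational rather than conceptual. On the lower bound side I need the purification bookkeeping clean enough that the reference $\h_C$ can be attached to \emph{either} party, since it is this symmetry — yielding $r\le\dim\h_{S_A}$ \emph{and} $r\le\dim\h_{S_B}$ — that upgrades the weak estimate ``$\dim\h_{S_A}\cdot\dim\h_{S_B}\ge r^2$'' to the sharp statement that \emph{each} seed register carries $\ge\lceil\log_2 r\rceil$ qubits (a distinction that matters because $\lceil\log_2 r\rceil$ can exceed $\tfrac12\lceil2\log_2 r\rceil$). On both sides, the other delicate point is pinning down the transpose and index conventions so that the partial trace of $\ket{\Psi}\bra{\Psi}$ comes out exactly as $\sum\ket{x}\bra{x'}\otimes\ket{y}\bra{y'}\,\tr\bigl((A_{x'}^\dag A_x)^T(B_{y'}^\dag B_y)\bigr)$ and not as a conjugated or index-permuted variant.
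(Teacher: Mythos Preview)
Your proposal is correct and is essentially the paper's proof: the paper factors it through Lemma~\ref{lem:qrho} (which shows $\Q(\rho)$ equals the minimum $\lceil\log_2\srank(\ket{\psi})\rceil$ over purifications $\ket{\psi}$ of $\rho$, via exactly your Stinespring/purify-the-seed argument) and Lemma~\ref{lem:alt} (the bijection between Schmidt-rank-$r$ purifications and $r$-column matrix data, via exactly your $\ket{\Psi}=\sum_i(\sum_x\ket{x}\ket{\alpha^i_x})\otimes(\sum_y\ket{y}\ket{\beta^i_y})$ construction and partial-trace computation). The only cosmetic difference is that the paper, in the lower bound of Lemma~\ref{lem:qrho}, assumes without loss of generality that Alice's seed register is the smaller one and purifies onto Bob's side, whereas you attach the reference $\h_C$ symmetrically to either party; both yield $r\le 2^{\,\text{size}(\sigma)}$.
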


\medskip 

\suppress{
Finally we give an upper bound on the classical complexity of approximating a classical state, viewed as probability distribution, in terms of the {\em common information} introduced by Wyner \cite{Wyn75} defined as below.
\begin{Def}[Wyner, \cite{Wyn75}] \label{def:comminf}
    The \emph{common information} $C(X:Y)$ between two random variables $X$ and $Y$ is the minimum value of $I(XY:W)$ (the mutual information between $XY$ and $W$), where the minimum is over all random variable $W$ such that $X$ and $Y$ are independent conditioned on $W$.
\end{Def}
We consider approximate version of classical complexity.
\begin{Def} Let $\epsilon >0$. Let $P$ be a probability distribution on $\mcX \times \mcY$. Define
$$\R_\epsilon(P) \defeq \{ \R(Q) ~|~ Q \text{ is a probability distribution on $\mcX \times \mcY$ with } \|P - Q\|_1 \geq 1 - \epsilon\} .$$
\end{Def}
For convenience we take the approximation by $\ell_1$ distance, $\|P-Q\|_1$, between $P$ and $Q$, in the definition above, instead of fidelity distance as in definitions of $\Q_\epsilon(\rho)$ and $\Q^{pure}_\epsilon(\rho)$.
\begin{Thm} \label{res:class} Let $\beta, \delta, \gamma >0$. Let $P$ be a distribution on  $\mcX \times \mcY$. Let random variables $XY$ be distributed according to $P$ such that $X$ is distributed in $\mcX$ and $Y$ is distributed in $\mcY$. Then,
    $$\R_{6\beta + \delta + 2\gamma}(X,Y) \leq C(X:Y)/\beta + 2\log(1/\delta) + \log\log(1/\gamma). $$ 
\end{Thm}

\medskip
}
The rest of the paper is organized as follows. In the next section we discuss our notation and some information theoretic preliminaries. In section \ref{sec:pure}, we prove Theorem \ref{res:pure}. In section \ref{sec:qrho} we prove Theorem \ref{res:distribution} and Theorem~\ref{res:qrho}. 

\section{Preliminaries}\label{sec:pre}

\subsection*{Matrix theory} For a natural number $n$ we let $[n]$ represent the set $\{1,2, \ldots, n\}$. For a matrix $A$, we let $A^T$ represent the transpose of $A$, $A^*$ represent the conjugate of $A$ and $A^\dag$ represent the conjugate transpose of $A$. An operator $A$ is said to be {\em Hermitian} if $A^\dag = A$. A Hermitian operator $A$ is said to be positive semi-definite if all its eigenvalues are non-negative. We will use the following fact.
\begin{Fact}
\label{fact:psd}
Let $\ket{v_1}, \ldots, \ket{v_r}$ be vectors in $\mbC^n$ for some $n \geq 1$. Then the $r \times r$ matrix $A$ defined by $A(i,j) \defeq \qip{v_i}{v_j}$ is positive semi-definite. 
\end{Fact}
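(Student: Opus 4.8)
The plan is to work directly from the definition of positive semi-definiteness given above: I will first check that $A$ is Hermitian, and then show that all of its eigenvalues are nonnegative. For Hermiticity, I would note that $A(j,i) = \qip{v_j}{v_i} = \overline{\qip{v_i}{v_j}} = \overline{A(i,j)}$, so that $A^\dag = A$.

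To control the eigenvalues, the key step is to establish that the quadratic form $\bra{c} A \ket{c}$ is a nonnegative real number for every $\ket{c} = (c_1, \ldots, c_r)^T \in \mbC^r$. Expanding via the definition $A(i,j) \defeq \qip{v_i}{v_j}$ gives $\bra{c} A \ket{c} = \sum_{i,j} \overline{c_i}\, c_j\, \qip{v_i}{v_j}$, and using conjugate-linearity in the first argument and linearity in the second argument of the inner product, I would recognize this double sum as the squared norm of a single vector, $\big\| \sum_i c_i \ket{v_i} \big\|^2 \geq 0$. Finally, applying this with $\ket{c}$ an eigenvector of $A$ with eigenvalue $\lambda$ yields $\lambda \cdot \qip{c}{c} = \bra{c} A \ket{c} \geq 0$, hence $\lambda \geq 0$, so every eigenvalue is nonnegative and $A$ is positive semi-definite.

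This is the standard Gram-matrix calculation and presents no genuine obstacle; the only point requiring care is tracking the complex conjugation onto the correct slot of the inner product, so that the double sum collapses cleanly into a single squared norm.
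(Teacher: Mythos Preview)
Your proof is correct; it is the standard Gram-matrix argument. The paper itself does not supply a proof of this statement at all---it is listed as a basic fact in the preliminaries and simply used later---so there is nothing to compare against.
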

If $A$ is positive semi-definite then so is $A^T = A^*$. We let $\sigma_1(A) \geq \cdots \geq \sigma_n(A)$ denote singular values of $A$. The rank of $A$, denoted $\rank(A)$, is defined to be the number of the non-zero singular values of $A$. The {\em Frobenius norm} of $A$ is defined as $\|A\|_2 = \sqrt{\sum_i \sigma_i(A)^2}$ and its {\em trace norm} is defined as $\|A\|_{1} = \sum_i \sigma_i$. For $\epsilon >0$, define 
$$ \rank_\epsilon(A) = \min \{\rank(B) ~|~ \|A-B\|_2^2 \leq \epsilon \}.$$ 
The following well-known result says that the best way to approximate $A$ (under the Frobenius norm) with the least rank is by taking the large singular values part.
\begin{Lem}[Eckart-Young, \cite{EY36}]\label{lem:rankl2}
    Let $\|A\|_2 = 1$ and $\epsilon >0$. Then,
    \[
    \rank_\epsilon(A) = \textup{the minimum } k \textup{ such that }  \sum_{i=1}^k \sigma_i(A)^2 \geq 1-\epsilon \enspace .
    \]
\end{Lem}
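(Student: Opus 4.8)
The plan is to invoke the singular value decomposition (SVD) and split the argument into a matching upper and lower bound. I would write $A = \sum_i \sigma_i(A)\, u_i v_i^\dag$ for orthonormal families $\{u_i\}$ and $\{v_i\}$, and let $k^*$ denote the minimum $k$ with $\sum_{i=1}^k \sigma_i(A)^2 \geq 1-\epsilon$. The goal is then precisely to show $\rank_\epsilon(A) = k^*$, which I obtain by proving each inequality separately.

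For the upper bound $\rank_\epsilon(A) \le k^*$, I would exhibit the truncated SVD $B = \sum_{i=1}^{k^*} \sigma_i(A)\, u_i v_i^\dag$ as an explicit rank-$k^*$ approximant. Since the rank-one matrices $u_i v_i^\dag$ are orthonormal under the Frobenius inner product $\langle M,N\rangle = \tr(M^\dag N)$, we get $\|A-B\|_2^2 = \sum_{i>k^*}\sigma_i(A)^2 = \|A\|_2^2 - \sum_{i\le k^*}\sigma_i(A)^2 = 1 - \sum_{i\le k^*}\sigma_i(A)^2 \le \epsilon$ by the defining property of $k^*$. Hence $B$ is feasible in the definition of $\rank_\epsilon$, so $\rank_\epsilon(A)\le \rank(B) \le k^*$.

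The substantive part is the lower bound $\rank_\epsilon(A) \ge k^*$, i.e. the optimality of the truncated SVD. I would show that every $B$ with $\rank(B)=r$ satisfies $\|A-B\|_2^2 \ge \sum_{i>r}\sigma_i(A)^2$. The cleanest route is Weyl's subadditivity inequality for singular values, $\sigma_{i+j-1}(X+Y)\le \sigma_i(X)+\sigma_j(Y)$. Applying it with $X=B$, $Y=A-B$ and $i=r+1$, so that $\sigma_{r+1}(B)=0$, yields $\sigma_{r+j}(A)\le \sigma_j(A-B)$ for every $j\ge 1$, whence $\|A-B\|_2^2 = \sum_j \sigma_j(A-B)^2 \ge \sum_{i>r}\sigma_i(A)^2$. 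If in addition $\rank(B)\le k^*-1$, this lower bound is at least $\sum_{i\ge k^*}\sigma_i(A)^2 = 1 - \sum_{i<k^*}\sigma_i(A)^2$; and since $k^*-1$ fails the defining inequality, $\sum_{i<k^*}\sigma_i(A)^2 < 1-\epsilon$, so $\|A-B\|_2^2 > \epsilon$. Thus no matrix of rank below $k^*$ can be $\epsilon$-close, giving $\rank_\epsilon(A)\ge k^*$; combining with the upper bound finishes the proof.

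The only nontrivial ingredient, and hence the main obstacle, is the lower bound via Weyl's inequality. If a self-contained argument is preferred, the key inequality $\sigma_j(A-B)\ge\sigma_{r+j}(A)$ can instead be derived from the Courant–Fischer min-max characterization of singular values; alternatively, one first proves the weaker operator-norm bound $\|A-B\|_{\mathrm{op}}\ge \sigma_{r+1}(A)$ by a dimension count (the kernel of $B$ meets $\mathrm{span}\{v_1,\dots,v_{r+1}\}$ nontrivially, and $A$ acts on that intersection with norm at least $\sigma_{r+1}(A)$) and then bootstraps to the Frobenius statement by induction on $r$.
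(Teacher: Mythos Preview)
Your proof is correct and follows the standard route to the Eckart--Young theorem: truncated SVD for the upper bound, and Weyl's singular value inequality (or, as you note, Courant--Fischer) for the lower bound. There is nothing to compare against, however, since the paper does not prove this lemma; it simply cites it as the classical Eckart--Young result \cite{EY36} and uses it as a black box.
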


\suppress{
The next theorem exhibits a weak majorization of the singular values of the product of two matrices by the entry-wise product of the singular values of the two matrices. For a proof see, for example, the standard textbook \cite{HJ94} (Theorem 3.3.14).

\begin{Thm}\label{thm:SV-majorize}
 Let $A$ be an $n \times p$ matrix and let $B$ be a $p \times m$ matrix. Then,
 $$\sum_{i=1}^k \sigma_i(AB) \leq \sum_{i=1}^k \sigma_i(A) \sigma_i(B),$$ 
for all $k \in [\min\{n,p,m\}]$.
\end{Thm}
}

The following definition of psd-rank of a matrix was proposed in \cite{FMP+12}.
\begin{Def}[\cite{FMP+12}]
    For a matrix $P\in \mbR_+^{n\times m}$, its psd-rank, denoted $\prank(P)$, is the minimum number $r$ such that there are positive semi-definite matrices $C_x,D_y\in \mbC^{r\times r}$ with $\tr(C_x D_y) = P(x,y)$, $\forall x \in [n], y\in [m]$.
\end{Def}

\subsection*{Quantum computing}
A quantum state $\rho$ in Hilbert space $\h$, denoted $\rho \in \h$, is a trace one positive semi-definite operator acting on $\h$. The size of a state $\rho$ is defined to be half the number of qubits of $\rho$. Here we take the factor of half because we shall talk about a correlation as a \emph{shared} resource. It is consistent with the convention that when the two parties shares a classical correlation $(X,Y)$, where $Y = X = R$ for a $r$-bit random string $R$, we say that they share a random variable $R$ of size $r$. A quantum state is called pure iff it is rank one. We often also identity a pure state with its unique eigenvector with non-zero eigenvalue. For quantum states $\rho$ and $\sigma$, their fidelity is defined as $\F(\rho,\sigma) \defeq \tr(\sqrt{\sigma^{1/2}\rho\sigma^{1/2}})$. For $\rho, \ket{\psi} \in \h$, we have $\F(\rho,\ket{\psi}\bra{\psi}) = \sqrt{\bra{\psi} \rho \ket{\psi}}$. We define norm of $\ket{\psi}$ as $\|\ket{\psi}\| \defeq \sqrt{\qip{\psi}{\psi}}$. For a quantum state $\rho \in  \h_A \otimes \h_B$, we let $\tr_{\h_B} \rho$ represent the partial trace of $\rho$ in $\h_A$ after tracing out $\h_B$. Let $\rho \in \h_A$ and $\ket{\phi} \in \h_A \otimes \h_B$ be such that $\tr_{\h_B} \ket{\phi}\bra{\phi}  = \rho$, then we call $\ket{\phi}$ a {\em purification} of $\rho$. For a pure state $\ket{\psi} \in \h_A \otimes \h_B$, its {\em Schmidt decomposition}  is defined as $\ket{\psi} = \sum_{i=1}^r \sqrt{p_i} \cdot \ket{v_i} \otimes \ket{w_i}$, where $\{\ket{v_i} \in \h_A\}$ are orthonormal, $\{\ket{w_i} \in \h_B\} $ are orthonormal, $\forall i :~ p_i \geq 0$ with $\sum_{i=1}^r p_i = 1$. It is easily seen that $r$ is also equal to $\rank(\tr_{\h_A} \ket{\psi}\bra{\psi} ) =\rank(\tr_{\h_B} \ket{\psi}\bra{\psi} ) $ and is therefore the same in all Schmidt decompositions of $\ket{\psi}$.  This number is also referred to as the {\em Schmidt rank} of $\ket{\psi}$ and denoted $\srank(\ket{\psi})$. 
Sometimes we absorb the coefficients $\sqrt{p_i}$ in $\ket{v_i} \otimes \ket{w_i}$, in which case $\ket{v_i}, \ket{w_i}$ may not be unit vectors. The following is easily verified.
\begin{Fact}\label{fact:sranksame}
Let $U_A$ be a unitary operator on $\h_A$ and let $U_B$ be a unitary operator on $\h_B$. Let $\ket{\psi} \in \h_A \otimes \h_B$. Then $\srank(\ket{\psi}) = 
\srank((U_A \otimes U_B) \ket{\psi})$.
\end{Fact}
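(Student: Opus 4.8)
The plan is to prove this directly from the Schmidt decomposition, using the fact—already recorded in the excerpt—that the number of terms $r$ is the same in all Schmidt decompositions of a fixed state. First I would fix a Schmidt decomposition $\ket{\psi} = \sum_{i=1}^r \sqrt{p_i}\,\ket{v_i}\otimes\ket{w_i}$ with all $p_i>0$, where $\{\ket{v_i}\}\subseteq\h_A$ and $\{\ket{w_i}\}\subseteq\h_B$ are orthonormal, so that $r=\srank(\ket{\psi})$ by definition.

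Applying $U_A\otimes U_B$ term by term gives $(U_A\otimes U_B)\ket{\psi} = \sum_{i=1}^r \sqrt{p_i}\,(U_A\ket{v_i})\otimes(U_B\ket{w_i})$. The key observation is that unitaries preserve inner products: since $\qip{v_i}{v_j}=\delta_{ij}$ and $U_A^\dag U_A = I$, the vectors $\{U_A\ket{v_i}\}$ remain orthonormal, and likewise $\{U_B\ket{w_i}\}$ remain orthonormal. Hence the displayed expression is itself a Schmidt decomposition of $(U_A\otimes U_B)\ket{\psi}$, with exactly the same positive coefficients $\sqrt{p_i}$ and the same number of terms $r$. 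Invoking the invariance of the number of Schmidt terms across decompositions then yields $\srank((U_A\otimes U_B)\ket{\psi}) = r = \srank(\ket{\psi})$.

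Alternatively, one could argue through the reduced-density-matrix characterization $\srank(\ket{\psi})=\rank(\tr_{\h_A}\ket{\psi}\bra{\psi})$: a short computation using cyclicity of the trace within the $\h_A$ factor shows $\tr_{\h_A}\bigl[(U_A\otimes U_B)\ket{\psi}\bra{\psi}(U_A^\dag\otimes U_B^\dag)\bigr] = U_B\,(\tr_{\h_A}\ket{\psi}\bra{\psi})\,U_B^\dag$, and conjugation by the unitary $U_B$ preserves rank. There is no real obstacle here—the statement is elementary—so the only point requiring care is confirming that it is orthonormality (not merely linear independence) that survives application of a unitary, which is precisely what guarantees the transformed expression is a genuine Schmidt decomposition and lets us read off the rank directly.
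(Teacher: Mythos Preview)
Your argument is correct; the paper itself does not supply a proof of this fact, merely noting that it ``is easily verified.'' Your Schmidt-decomposition calculation (and the alternative reduced-density-matrix argument) is exactly the kind of routine verification the authors have in mind.
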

The following fact follows by considering Schmidt decomposition of the pure states involved; see, for example, Ex(2.81) of \cite{NC00}.
\begin{Fact}\label{fact:local}
Let $\ket{\psi}, \ket{\phi} \in \h_A \otimes \h_B$ be such that $\tr_{\h_B} \ket{\phi}\bra{\phi} = \tr_{\h_B} \ket{\psi}\bra{\psi} $. There exists a unitary operation $U$ on $\h_B$ such that $(I_{\h_A} \otimes U) \ket{\psi} = \ket{\phi}$, where $I_{\h_A}$ is the identity operator on $\h_A$.
\end{Fact}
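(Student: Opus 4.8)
The plan is to prove Fact~\ref{fact:local} via the structure theorem for purifications, but organized so as to avoid the usual ambiguity that arises when the common reduced state has degenerate eigenvalues. The key device is to fix \emph{one} orthonormal eigenbasis of the reduced state on $\h_A$ at the outset and write both purifications in that basis, rather than invoking two separate Schmidt decompositions and trying to match them.

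First I would set $\rho \defeq \tr_{\h_B} \ket{\psi}\bra{\psi} = \tr_{\h_B} \ket{\phi}\bra{\phi}$ and fix an orthonormal eigenbasis $\{\ket{e_k}\}$ of $\h_A$ with $\rho = \sum_k \lambda_k \ket{e_k}\bra{e_k}$ and each $\lambda_k \geq 0$. The central claim is that \emph{any} purification of $\rho$ has the form $\sum_{k:\,\lambda_k>0} \sqrt{\lambda_k}\,\ket{e_k}\otimes\ket{f_k}$ for some orthonormal set $\{\ket{f_k}\}$ in $\h_B$. To see this, expand $\ket{\psi} = \sum_k \ket{e_k}\otimes\ket{g_k}$, where $\ket{g_k} \defeq (\bra{e_k}\otimes I_{\h_B})\ket{\psi} \in \h_B$ are a priori neither normalized nor orthogonal. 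Computing $\tr_{\h_B}\ket{\psi}\bra{\psi} = \sum_{k,l} \ket{e_k}\bra{e_l}\cdot \qip{g_l}{g_k}$ and matching this against $\rho = \sum_k \lambda_k \ket{e_k}\bra{e_k}$ (using that $\{\ket{e_k}\}$ is orthonormal) forces $\qip{g_l}{g_k} = \lambda_k\, \delta_{kl}$. Hence the $\ket{g_k}$ are pairwise orthogonal with $\|\ket{g_k}\|^2 = \lambda_k$; normalizing the nonzero ones as $\ket{f_k} \defeq \ket{g_k}/\sqrt{\lambda_k}$ gives the claimed orthonormal form, and $\ket{g_k}=0$ whenever $\lambda_k=0$.

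Applying this to both states yields $\ket{\psi} = \sum_{k:\,\lambda_k>0} \sqrt{\lambda_k}\,\ket{e_k}\otimes\ket{f_k}$ and $\ket{\phi} = \sum_{k:\,\lambda_k>0} \sqrt{\lambda_k}\,\ket{e_k}\otimes\ket{f_k'}$, with $\{\ket{f_k}\}$ and $\{\ket{f_k'}\}$ each orthonormal in $\h_B$ and indexed by the same set $\{k : \lambda_k > 0\}$. I would then define $U$ on $\h_B$ by $U\ket{f_k} = \ket{f_k'}$ on this common index set; this is a well-defined linear isometry on $\mathrm{span}\{\ket{f_k}\}$. Since the two orthonormal sets have equal cardinality, their spans have equal dimension, so the orthogonal complements of $\mathrm{span}\{\ket{f_k}\}$ and $\mathrm{span}\{\ket{f_k'}\}$ in $\h_B$ also have equal dimension; extending $U$ by any unitary between these complements produces a genuine unitary on $\h_B$. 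A direct check then gives $(I_{\h_A}\otimes U)\ket{\psi} = \sum_k \sqrt{\lambda_k}\,\ket{e_k}\otimes U\ket{f_k} = \sum_k \sqrt{\lambda_k}\,\ket{e_k}\otimes\ket{f_k'} = \ket{\phi}$, as required.

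The two points that require care — and which the fixed-basis setup is designed to absorb — are the possible degeneracy of the eigenvalues of $\rho$ and the extension of the partial isometry to a full unitary. Degeneracy is handled automatically because we never ask the eigenvectors $\ket{e_k}$ to be canonically determined; we simply commit to one eigenbasis and let the $\h_B$-side vectors adapt to it, so no matching of ambiguous Schmidt vectors is needed. The extension step uses only that $\h_B$ is finite-dimensional, so equal-dimension complements admit a unitary identification; this is the main (and only genuinely nontrivial) obstacle, and it is routine once the equal-cardinality observation is in place.
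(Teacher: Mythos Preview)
Your proof is correct and is precisely the Schmidt-decomposition argument the paper has in mind: the paper does not give its own proof but simply states that the fact ``follows by considering Schmidt decomposition of the pure states involved'' and refers to Exercise~2.81 of \cite{NC00}. Your write-up fills in exactly that argument, with the added care of fixing a single eigenbasis of the common reduced state so that the degeneracy issue is handled cleanly.
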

The following fundamental fact is shown by Uhlmann~\cite{NC00}.
\begin{Fact}[Uhlmann, \cite{NC00}] \label{fact:uhlmann}
Let $\rho, \sigma \in \h_A$. Let $\ket{\psi} \in \h_A \otimes \h_B$ be a purification of $\rho$ and $\dim(\h_A) \leq \dim(\h_B)$. There exists a purification $\ket{\phi} \in \h_A \otimes \h_B$ of $\sigma$ such that $\F(\rho, \sigma) = |\qip{\phi}{\psi}|$.
\end{Fact}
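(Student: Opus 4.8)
The plan is to prove the characterization by identifying the quantity $r$ in the statement with the minimum Schmidt rank of a \emph{bipartite purification} of $\rho$, namely a pure state $\ket{\Theta}\in(\h_A\otimes\h_{A'})\otimes(\h_B\otimes\h_{B'})$ with $\tr_{\h_{A'}\otimes\h_{B'}}\ket\Theta\bra\Theta=\rho$, where the Schmidt rank is taken across the $(\h_A\h_{A'})\mid(\h_B\h_{B'})$ cut. First I would record the bijection underlying this equality. Writing a Schmidt decomposition $\ket\Theta=\sum_{c=1}^r\ket{L_c}\otimes\ket{R_c}$ and expanding $\ket{L_c}=\sum_{x,a}(L_c)_{x,a}\ket x\ket a$ and $\ket{R_c}=\sum_{y,b}(R_c)_{y,b}\ket y\ket b$ in the computational and ancilla bases, a direct evaluation of the partial trace gives $\rho_{(x,y),(x',y')}=\tr\big((A_x^\dag A_{x'})^T (B_y^\dag B_{y'})\big)$, where $A_x$ is the matrix with $r$ columns whose $(a,c)$ entry encodes $(L_c)_{x,a}$, and $B_y$ is built from $(R_c)_{y,b}$. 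Up to the conjugation conventions fixed by the Hermiticity of $\rho$, this is exactly the displayed decomposition, and the number of columns equals the Schmidt rank $r$. Conversely, any $A_x,B_y$ with $r$ columns define such Schmidt vectors and hence a bipartite purification of Schmidt rank at most $r$. Thus $r=\min_{\ket\Theta}\srank(\ket\Theta)$.

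For the lower bound $\Q(\rho)\ge\lceil\log_2 r\rceil$, I would start from any protocol: a seed $\sigma\in\h_A^{(s)}\otimes\h_B^{(s)}$ and local channels $\mathcal E_A,\mathcal E_B$ producing $\rho$. Dilating each channel by Stinespring to an isometry and purifying $\sigma$ with a reference register attached to Alice yields a global pure state $\ket\Theta$ whose partial trace over all ancillas is $\rho$, i.e.\ a bipartite purification. Since applying isometries locally to each side does not change the Schmidt rank (cf.\ Fact~\ref{fact:sranksame}), $\srank(\ket\Theta)$ equals the Schmidt rank of the purified seed, which is at most $\dim\h_B^{(s)}$ (the rank of Bob's marginal); by the symmetric construction it is also at most $\dim\h_A^{(s)}$. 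Hence $r$, the minimum over all such purifications, satisfies $r\le\min(\dim\h_A^{(s)},\dim\h_B^{(s)})$, so the size $\tfrac12(\log_2\dim\h_A^{(s)}+\log_2\dim\h_B^{(s)})$ is at least $\log_2 r$, and because the two local dimensions are powers of two it is at least $\lceil\log_2 r\rceil$. Passing through purifications is exactly what lets this step handle mixed and unbalanced seeds uniformly, which is the reason not to try to read the matrices off a general mixed seed directly.

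For the upper bound $\Q(\rho)\le\lceil\log_2 r\rceil$, I would read $A_x,B_y$ (with $r$ columns) as unnormalized Kraus families: set $\hat A^{(k)}=\sum_{x,i}(A_x)_{k,i}\ket x\bra i$ and $\hat B^{(l)}$ analogously, so that $\sum_{k,l}(\hat A^{(k)}\otimes\hat B^{(l)})\ket{\Phi}\bra\Phi(\hat A^{(k)}\otimes\hat B^{(l)})^\dag=\rho$ for the unnormalized $\ket\Phi=\sum_{i=1}^r\ket i\ket i$; this is the same index computation as above, run backwards. The one genuine obstacle is that $\{\hat A^{(k)}\}$ and $\{\hat B^{(l)}\}$ need not be trace preserving. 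I would resolve this by the substitution $\tilde A^{(k)}=\hat A^{(k)}G_A^{-1/2}$ with $G_A=\sum_k\hat A^{(k)\dag}\hat A^{(k)}$ (using the pseudo-inverse on the support, after pruning any common kernel of the $A_x$), which makes $\{\tilde A^{(k)}\}$ a genuine channel, and symmetrically $\tilde B^{(l)}$ with $G_B$; the rescaling is compensated by absorbing $G_A^{1/2}\otimes G_B^{1/2}$ into the seed, producing the pure seed $\ket\tau=(G_A^{1/2}\otimes G_B^{1/2})\ket\Phi$ of local dimension $r$ on each side with $(\tilde{\mathcal E}_A\otimes\tilde{\mathcal E}_B)(\ket\tau\bra\tau)=\rho$. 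Crucially $\ket\tau$ is automatically normalized: since the channels are trace preserving and $\tr\rho=1$, one gets $\qip{\tau}{\tau}=\tr\rho=1$. Embedding the $r$-dimensional seed into $\lceil\log_2 r\rceil$ qubits per party then yields a valid protocol of size $\lceil\log_2 r\rceil$, completing the characterization.
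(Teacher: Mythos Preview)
Your proposal does not address the stated Fact at all. Fact~\ref{fact:uhlmann} is Uhlmann's theorem: given two states $\rho,\sigma$ on $\h_A$ and a fixed purification $\ket\psi$ of $\rho$, there exists a purification $\ket\phi$ of $\sigma$ with $|\qip{\phi}{\psi}|=\F(\rho,\sigma)$. Your write-up, by contrast, is a proof sketch for Theorem~\ref{res:qrho}, the characterization of $\Q(\rho)$ via matrices $A_x,B_y$ with $r$ columns. Nothing in your argument touches fidelity, the second state $\sigma$, or the existence of an optimizing purification; the quantities $r$, $\Q(\rho)$, $A_x$, $B_y$ do not even appear in the statement you were asked to prove.

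Moreover, in the paper Fact~\ref{fact:uhlmann} is not proved at all: it is recorded as a standard result and cited to \cite{NC00}. So there is no ``paper's own proof'' to compare against, and your proposal---whatever its merits as an approach to Theorem~\ref{res:qrho} (where it is indeed close in spirit to the paper's combination of Lemma~\ref{lem:qrho} and Lemma~\ref{lem:alt})---is simply off target for the statement given.
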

We define the {\em approximate Schmidt rank} as follows.
\begin{Def}
\label{def:appsrank}
Let $\epsilon>0$. Let $\ket{\psi}$ be a pure state in $\h_A \otimes \h_B$. Define
$$\srank_\epsilon(\ket{\psi}) \defeq \min\{ \srank(\ket{\phi}) ~|~ \ket{\phi} \in \h_A \otimes \h_B \text{ and } \F(\ket{\psi}\bra{\psi}, \ket{\phi}\bra{\phi}) \geq 1 - \epsilon\} .$$
\end{Def}
Let $\{\ket{x} ~|~ x \in [\dim(\h_A)]\}$ be the computational bases for $\h_A$ and let $\{\ket{y} ~|~ y \in [\dim(\h_B)]\}$ be the computational bases for $\h_B$. We define linear map $\vecinv$ which takes vectors in $\h_A \otimes \h_B$ and maps them to operators from $\h_B$ to $\h_A$. For all $x \in [\dim(\h_A)], y \in [\dim(\h_B)]$ define  $\vecinv(\ket{x} \otimes \ket{y}) \defeq \ket{x}\bra{y}$ and extend to all vectors in $\h_A \otimes \h_B$ by linearity. For $\ket{\psi} \in \h_A \otimes \h_B$, it is easily seen that $\|\ket{\phi} \| = \|\vecinv(\ket{\psi})\|_2$.

In the following sections we assume Hilbert spaces $\h_A, \h_{A_1}, \h_{A_2}$ etc. are possessed by \alice and  Hilbert spaces $\h_B, \h_{B_1}, \h_{B_2}$ etc. are possessed by \bob.

We start by showing the following key lemma which we will use many times in the following sections. 
\begin{Lem}\label{lem:qrho}
Let $\rho$ be a quantum state  in $\h_A \otimes \h_B$. Then, 
\begin{equation*}
    \Q(\rho) = \min_{\h_{A_1}, \h_{B_1}}\{\big\lceil\log_2\big(\srank(\ket{\psi})\big)\big\rceil ~|~ \rho= \tr_{\h_{A_1} \otimes \h_{B_1}} \ket{\psi}\bra{\psi}\}.
\end{equation*}
\end{Lem}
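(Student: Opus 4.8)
The plan is to prove the two inequalities separately. For the direction $\Q(\rho) \le \min \{ \lceil \log_2 \srank(\ket{\psi}) \rceil : \rho = \tr_{\h_{A_1}\otimes\h_{B_1}} \ket{\psi}\bra{\psi}\}$, I would start from an arbitrary purification $\ket{\psi} \in (\h_A \otimes \h_{A_1}) \otimes (\h_B \otimes \h_{B_1})$ of $\rho$ and consider its Schmidt decomposition across the $A$-$B$ cut, $\ket{\psi} = \sum_{i=1}^r \sqrt{p_i}\, \ket{v_i}\otimes\ket{w_i}$ with $r = \srank(\ket{\psi})$. The key idea is that the canonical seed state $\sigma = \sum_i \sqrt{p_i} \ket{i}_A \otimes \ket{i}_B$ (a shared state on two registers of dimension $r$, i.e.\ of size $\lceil \log_2 r \rceil$) already contains all the ``entanglement content'' of $\ket{\psi}$: \alice can locally apply an isometry $\ket{i}_A \mapsto \ket{v_i}$ and \bob an isometry $\ket{i}_B \mapsto \ket{w_i}$ to transform $\sigma$ into $\ket{\psi}$, after which both parties trace out their ancilla registers $\h_{A_1}, \h_{B_1}$ to obtain $\rho$. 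Since these are local operations with no communication, this exhibits a protocol of size $\lceil \log_2 r \rceil$, giving the upper bound.

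For the converse, $\Q(\rho) \ge \min \{\lceil \log_2 \srank(\ket{\psi})\rceil : \ldots\}$, I would take an optimal protocol achieving $\Q(\rho)$: \alice and \bob share a seed $\sigma \in \h_{A_2}\otimes\h_{B_2}$ of size $\Q(\rho)$, apply local channels $\Phi_A, \Phi_B$, and obtain $\rho$. Purify $\sigma$ itself by a state $\ket{\sigma}$ on $\h_{A_2}\otimes\h_{B_2}$ (one can assume $\sigma$ is already pure, or purify it using additional registers held one-per-party without increasing the Schmidt rank beyond $\dim \h_{A_2}$); the Schmidt rank of $\ket{\sigma}$ across the $\alice$-$\bob$ cut is at most $2^{\Q(\rho)}$. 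Now realize each local channel as an isometry into a larger space (Stinespring dilation), so the whole protocol becomes: apply a local isometry $V_A$ on \alice's side and $V_B$ on \bob's side to $\ket{\sigma}$, producing a pure state $\ket{\psi}$ whose reduced state on $\h_A \otimes \h_B$ (after tracing out all the ancilla/environment registers, which are partitioned between the two parties) equals $\rho$. Since local isometries do not increase Schmidt rank (an extension of Fact \ref{fact:sranksame}), $\srank(\ket{\psi}) \le \srank(\ket{\sigma}) \le 2^{\Q(\rho)}$, and $\ket{\psi}$ is a purification of $\rho$ of the required form, which proves the lower bound.

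The main obstacle I anticipate is the bookkeeping in the converse direction: one must be careful that the Stinespring dilation of \alice's channel introduces an environment register that stays on \alice's side (and similarly for \bob), so that the final pure state really has the tensor structure $(\h_A \otimes \h_{A_1}) \otimes (\h_B \otimes \h_{B_1})$ with the trace-out being over $\h_{A_1}\otimes\h_{B_1}$ as demanded by the statement. One must also handle the possibility that the shared seed $\sigma$ is mixed rather than pure; purifying $\sigma$ locally (each party holds half of the purification of its own marginal, or alternatively noting that a mixed seed can always be replaced by a pure one of the same size by moving to the canonical form) needs a short justification. The forward direction is essentially the observation that the canonical seed is universal, which is standard, but stating precisely why the Schmidt rank of the purification of $\rho$ upper-bounds the required seed size is the crux and follows from the isometry argument above. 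Finally, the $\min$ over $\h_{A_1},\h_{B_1}$ is attained because one may always restrict to ancilla spaces of dimension at most that of $\h_A\otimes\h_B$ by Fact \ref{fact:uhlmann}-type considerations on purifications.
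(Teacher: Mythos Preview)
Your proposal is correct and follows essentially the same approach as the paper: both directions use the Schmidt decomposition of a purification together with the invariance of Schmidt rank under local isometries (Stinespring dilations of the local channels). The paper's treatment of the mixed-seed issue you flag is exactly along your second suggested line---it bounds the Schmidt rank of a purification of $\sigma$ by $\dim\h_{A_2}\le 2^{\Q(\rho)}$ via the eigendecomposition of $\sigma_A=\tr_{\h_B}\sigma$ and Fact~\ref{fact:local}.
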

\begin{proof}
Let $r \defeq \min_{\h_{A_1}, \h_{B_1}}\{\big\lceil\log_2\big(\srank(\ket{\psi})\big)\big\rceil ~|~ \rho= \tr_{\h_{A_1} \otimes \h_{B_1}} \ket{\psi}\bra{\psi}\}$. We first show $\Q(\rho) \leq r$. Let $\ket{\psi}$ be such that 
$$ r = \big\lceil\log_2\big(\srank(\ket{\psi})\big)\big\rceil \text { and }  \rho= \tr_{\h_{A_1} \otimes \h_{B_1}} \ket{\psi}\bra{\psi} .$$
Let $t \defeq {\srank(\ket{\psi})}$. Let $\ket{\psi}$ have a Schmidt decomposition 
$$\ket{\psi} = \sum_{i=1}^t \sqrt{p_i} \cdot \ket{v_i} \otimes \ket{w_i},$$ 
Let \alice and \bob start with the state 
$$\ket{\phi} = \sum_{i=1}^{t} \sqrt{p_i} \cdot \ket{i} \otimes \ket{i},$$
and transform $\ket{\phi}$ to $\ket{\psi}$ using local unitary transformations. This shows that $\Q(\rho) \leq \lceil \log_2 t \rceil =r$.

For  the other direction let $s \defeq \Q(\rho)$. Let \alice and \bob start with the seed state $\sigma$ and apply local {\em completely positive trace preserving maps} $\Phi_A, \Phi_B$ respectively to produce $\rho$. Let us assume without loss of generality that the number of qubits of $\sigma_A \defeq \tr_{\h_B} \sigma $ is at most $s$. Let $\sigma_A = \sum_{i=1}^{2^s} a_i \ket{v_i} \bra{v_i}$, where $a_i \geq 0$ is the $i$-th eigenvalue of $\sigma_A$ with eigenvector $\ket{v_i}$. Define
$$ \ket{\phi} \defeq \sum_{i=1}^{2^s} \sqrt{a_i} \cdot \ket{v_i} \otimes \ket{v_i} $$
and let 
$$ \ket{\phi'} = \sum_{i=1}^{2^s} \sqrt{a_i} \cdot \ket{v_i} \otimes \ket{w_i} $$
be a purification of $\sigma$, where $\forall i : \ \ket{v_i} \in \h_A \text{ and }  \ket{w_i} \in \h_B \otimes \h_{B_2}$.

Now consider the following operations by \alice and \bob. They start with the shared state $\ket{\phi}$. \bob using local unitary (after attaching ancilla $\ket{0}$ if needed) transforms $\ket{\phi}$ to $\ket{\phi'}$ (\bob can do this follows from Fact~\ref{fact:local}). \alice and \bob then simulate their maps $\Phi_A, \Phi_B$ on $\sigma$ by local unitaries   (each after attaching ancilla $\ket{0}$ if needed on their parts; such a simulation is a standard fact, please refer to~\cite{NC00}) and finally produce a purification $\ket{\theta} \in \h_A \otimes \h_{A_1} \otimes \h_B \otimes \h_{B_1}$ of $\rho$. Since \alice and \bob, using local unitary operations and attaching ancilla $\ket{0}$, transform $\ket{\phi}$ to $\ket{\theta}$, we have (using Fact~\ref{fact:sranksame}) $2^s \geq  \srank(\ket{\phi}) = \srank(\ket{\theta})$. This shows that $r \leq s$.
\end{proof}

The following lemma is credited to Nayak (personal communication) in~\cite{Zha12}.
\begin{Lem}
For a quantum state $\rho \in \h_A \times \h_B$ , $\Q(\rho) = \qcomm(\rho)$.
\end{Lem}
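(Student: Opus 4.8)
The plan is to establish the two inequalities $\qcomm(\rho)\le\Q(\rho)$ and $\Q(\rho)\le\qcomm(\rho)$ separately. The first is the easy direction. By (the proof of) Lemma~\ref{lem:qrho} we may take the seed to be a symmetric state $\sigma\in\h_A\otimes\h_B$ with $s\defeq\Q(\rho)$ qubits on each side, together with local completely positive trace preserving maps $\Phi_A,\Phi_B$ that take $\sigma$ to $\rho$. In the communication model, \alice prepares, entirely in her own lab, a purification of $\sigma$ on registers $\h_A\otimes\h_B\otimes\h_{A'}$ (\bob starts with $\ket0$, so the joint initial state is a tensor state as required), sends the $\h_B$ register --- which is $s$ qubits --- to \bob, and then the parties apply $\Phi_A$ and $\Phi_B$ locally. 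This protocol exchanges $s$ qubits and outputs $\rho$, so $\qcomm(\rho)\le s=\Q(\rho)$.

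For the reverse inequality, fix an optimal communication protocol using $c\defeq\qcomm(\rho)$ qubits in total that outputs $\rho$ from a tensor seed $\sigma_A\otimes\sigma_B$. First I would bring it into a ``unitary, pure-state'' form in the standard way: purify $\sigma_A$ with an ancilla in \alice's lab and $\sigma_B$ with an ancilla in \bob's lab (so the joint initial pure state is a product across the \alice:\bob cut, hence has Schmidt rank $1$), replace each local map by a local unitary on an enlarged register, and defer every partial trace to the end of the protocol. This produces a pure state $\ket\theta\in\h_A\otimes\h_{A_1}\otimes\h_B\otimes\h_{B_1}$ with $\rho=\tr_{\h_{A_1}\otimes\h_{B_1}}\ket\theta\bra\theta$, where $\h_{A_1}$ collects all of \alice's non-output registers and $\h_{B_1}$ all of \bob's. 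By Lemma~\ref{lem:qrho}, $\Q(\rho)\le\lceil\log_2\srank(\ket\theta)\rceil$, so it suffices to prove $\srank(\ket\theta)\le 2^c$, where the Schmidt rank is taken across the cut separating all of \alice's registers from all of \bob's.

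The crux is the elementary claim that moving a single qubit across the \alice:\bob cut multiplies the Schmidt rank across that cut by a factor of at most $2$. Since local unitaries do not change the Schmidt rank (Fact~\ref{fact:sranksame}) and the initial state has Schmidt rank $1$, applying the claim once per transmitted qubit gives $\srank(\ket\theta)\le 2^c$. To prove the claim, identify a pure state with the matrix of its computational-basis coefficients arranged according to the current bipartition (the rank of this matrix equals the Schmidt rank across that bipartition, cf.\ the map $\vecinv$). Suppose the qubit $Q$ moves from \alice\ to \bob: before the move this coefficient matrix $M'$ has rows indexed by \alice's registers together with $Q$ and columns indexed by \bob's registers, and has rank $R$; after the move the matrix $M$ has rows indexed by \alice's registers and columns indexed by $Q$ together with \bob's registers, and $M=[\,M_0\mid M_1\,]$, where $M_q$ is the submatrix of $M'$ picking out the rows with $Q=q$. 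Each $M_q$ has rank at most $R$, so $\rank(M)\le 2R$; the case of \bob\ sending to \alice\ is symmetric. Putting the pieces together, $\Q(\rho)\le\lceil\log_2 2^c\rceil=c=\qcomm(\rho)$. I expect the main obstacle to be not this rank estimate but the careful reduction to the unitary, pure-state model and the bookkeeping of which ancilla registers sit on which side, so that Lemma~\ref{lem:qrho} can be invoked verbatim.
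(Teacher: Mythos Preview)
Your proof is correct and follows essentially the same strategy as the paper: the easy inequality $\qcomm(\rho)\le\Q(\rho)$ (which the paper dismisses as ``clear'') you spell out by having \alice locally prepare the seed and ship the \bob-half across, and the hard inequality you prove, as the paper does, by purifying the communication protocol, invoking Fact~\ref{fact:sranksame} for the local unitaries, bounding the Schmidt-rank growth by a factor~$2$ per transmitted qubit, and finishing with Lemma~\ref{lem:qrho}. The only cosmetic difference is your justification of the factor~$2$: you use the block decomposition $M=[M_0\mid M_1]$ of the coefficient matrix, whereas the paper phrases it as ``the rank of the marginal state increases by at most a factor~$2$ on receiving a qubit''; these are the same fact.
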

\begin{proof}
Clearly $\Q(\rho) \geq \qcomm(\rho)$. For the other direction let $r \defeq \qcomm(\rho)$. Let \alice and \bob start with the state $\sigma_A \otimes \sigma_B \in \h_A \otimes \h_B$, do local quantum operations, communicate $r$ qubits and at the end output $\rho$. This protocol can be converted into another protocol where \alice and \bob start with a purification $\ket{\phi} \in \h_A \otimes \h_{A_1} \otimes \h_B \otimes \h_{B_1}$ of $\sigma_A \otimes \sigma_B$ (with $\srank(\ket{\phi})=1$), do local unitaries, exchange $r$ qubits  and at the end output a purification $\ket{\psi} \in \h_A \otimes \h_{A_1} \otimes \h_B \otimes \h_{B_1}$ of $\rho$. Since local unitaries do not increase the Schmidt rank of the shared state and exchanging $r$ qubits increases the Schmidt rank by a factor at most $2^r$ (since the $\rank$ of the marginal state possessed by \alice increases by at most a factor $2$ on receiving a qubit from \bob, and similarly for \bob on receiving a qubit from \alice), we have $\srank(\ket{\psi}) \leq 2^r$. Hence from Lemma~\ref{lem:qrho}, $\Q(\rho) \leq r$.  
\end{proof}

\suppress{
\subsection*{Classical information theory} Let $P$ be a probability distribution on $\mcX$. For $x \in \mcX$, we let $P(x)$ denote the probability of $x$ under $P$. For $\mcX_1 \subseteq \mcX$, we define $P(\mcX_1) \defeq \sum_{x\in \mcX_1} P(x)$. The  {\em entropy} of $P$, denoted $H(P)$ is defined as $H(P) \defeq \sum_{x\in \mcX} - P(x) \log P(x)$. Let $P,Q$ be probability distributions on $\mcX$. Then the $\ell_1$    distance between them, denoted $\|P-Q\|_1$ is defined as $ \frac{1}{2}  \cdot \sum_{x\in \mcX}|P(x) - P(y)|$.  Let $P$ be a distribution on $\mcX \times \mcY$. For $(x,y) \in \mcX \times \mcY$, define $P(y|x) \defeq \frac{P(x,y)}{P(x)}$ and $P(y) \defeq \sum_{x\in\mcX} P(x,y)$. Similar one can define $P(x|y)$ and $P(x)$. We identify a random variable to also represent its distribution. Let random variables $XY$ be distributed in $\mcX \times \mcY$ and have joint distribution $P$. The {\em mutual information} between $X$ and $Y$ is defined as 
$$I(X:Y) \defeq H(X) + H(Y) - H(XY) = \sum_{(x,y) \in \mcX \times \mcY} P(x,y) \log \frac{P(x,y)}{P(x)\cdot P(y)}.$$ 
}

\section{Correlation complexity of approximating a pure state}\label{sec:pure}
In this section we prove Theorem~\ref{res:pure}. We start by first characterizing the approximate Schmidt rank.
\begin{Lem}
\label{lem:appsrank}
Let $\epsilon > 0$. Let $\ket{\psi}$ be a pure state in $\h_A \otimes \h_B$ with a Schmidt decomposition $\ket{\psi} = \sum_{i=1}^r \sqrt{p_i} \cdot \ket{v_i} \otimes \ket{w_i}$ (with $p_1 \geq p_2 \geq \ldots \geq p_r > 0$ and $\sum_{i=1}^r p_i = 1$). Let $r'$ be the minimum number such that $\sum_{i=1}^{r'} p_i  \geq (1 - \epsilon)^2$.  Then $r' = \srank_\epsilon(\ket{\psi})$.
\end{Lem}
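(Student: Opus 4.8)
The plan is to prove both inequalities $\srank_\epsilon(\ket{\psi}) \leq r'$ and $\srank_\epsilon(\ket{\psi}) \geq r'$, working directly with the Schmidt decomposition of $\ket{\psi}$ and Fact~\ref{fact:local} / Fact~\ref{fact:uhlmann} to relate fidelity to inner products of purifications.

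For the upper bound $\srank_\epsilon(\ket{\psi}) \leq r'$, I would simply truncate: set $\ket{\phi} \defeq \frac{1}{\sqrt{s}}\sum_{i=1}^{r'} \sqrt{p_i} \cdot \ket{v_i} \otimes \ket{w_i}$ where $s = \sum_{i=1}^{r'} p_i$ is the normalizing constant. Then $\srank(\ket{\phi}) = r'$ (the $\ket{v_i}$ are orthonormal, the $\ket{w_i}$ are orthonormal, and all $p_i > 0$), and $\F(\ket{\psi}\bra{\psi}, \ket{\phi}\bra{\phi}) = |\qip{\phi}{\psi}| = \frac{1}{\sqrt{s}}\sum_{i=1}^{r'} p_i = \sqrt{s} \geq 1-\epsilon$ by the choice of $r'$. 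So $\ket{\phi}$ is a valid witness of Schmidt rank $r'$.

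For the lower bound $\srank_\epsilon(\ket{\psi}) \geq r'$, suppose $\ket{\phi} \in \h_A \otimes \h_B$ has $\srank(\ket{\phi}) = k < r'$ and $\F(\ket{\psi}\bra{\psi},\ket{\phi}\bra{\phi}) \geq 1-\epsilon$; I want a contradiction. Write $\ket{\psi} = \sum_i \sqrt{p_i} \ket{v_i}\otimes\ket{w_i}$ and consider the operator $M \defeq \vecinv(\ket{\phi})$, which has rank at most $k$. The inner product $\qip{\phi}{\psi}$ can be expressed in terms of $M$ and the Schmidt data of $\ket{\psi}$; more cleanly, I would bound $|\qip{\phi}{\psi}|$ by the largest it can be over all unit vectors $\ket{\phi}$ of Schmidt rank $\leq k$. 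That maximum is achieved by aligning $\ket{\phi}$ with the top-$k$ Schmidt components of $\ket{\psi}$, giving $\max |\qip{\phi}{\psi}| = \sqrt{\sum_{i=1}^{k} p_i}$. Since $k < r'$, the minimality of $r'$ gives $\sum_{i=1}^{k} p_i < (1-\epsilon)^2$, hence $|\qip{\phi}{\psi}| < 1-\epsilon$, contradicting $\F \geq 1-\epsilon$.

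The one step needing a little care is justifying that the fidelity between the two \emph{pure} states equals $|\qip{\phi}{\psi}|$ and that, among Schmidt-rank-$\leq k$ states, the overlap with $\ket{\psi}$ is maximized as claimed; this is essentially the pure-state case of Eckart--Young, which I would either cite (cf.\ Lemma~\ref{lem:rankl2} applied to $A = \vecinv(\ket{\psi})$, whose squared singular values are the $p_i$) or prove in two lines via Cauchy--Schwarz after expanding $\ket{\phi}$ in the $\{\ket{v_i}\otimes\ket{w_j}\}$ basis and using that $\vecinv(\ket{\phi})$ has rank $\leq k$. I expect this Eckart--Young-type optimization to be the only real content; the truncation direction is routine.
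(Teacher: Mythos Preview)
Your proposal is correct and matches the paper's proof: the upper bound is the same truncation argument, and for the lower bound the paper likewise reduces to Eckart--Young (Lemma~\ref{lem:rankl2}) by passing to $A=\vecinv(\ket{\psi})$ and $B=\beta\cdot\vecinv(\ket{\theta})$ and computing $\|A-B\|_2^2 = 1-\beta^2 \le 2\epsilon-\epsilon^2$. One caution: your ``two lines via Cauchy--Schwarz'' alternative actually needs the singular-value majorization inequality $\sum_i \sigma_i(M^\dag A) \le \sum_i \sigma_i(M)\sigma_i(A)$ before Cauchy--Schwarz can finish it, so the clean route is the Eckart--Young citation you already propose.
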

\begin{proof}
We will first show that $r' \geq \srank_\epsilon(\ket{\psi})$. Let $q = \sum_{i=1}^{r'} p_i$. Define
$$ \ket{\phi} \defeq \frac{1}{\sqrt{q}} \cdot \sum_{i=1}^{r'} \sqrt{p_i} \cdot \ket{v_i} \otimes \ket{w_i} .$$
Then $ \F(\ket{\psi}\bra{\psi}, \ket{\phi}\bra{\phi})= |\qip{\psi}{\phi}| = \sqrt{q} \geq 1 - \epsilon$. Clearly $\srank(\ket{\phi}) =r'$ and hence $r' \geq \srank_\epsilon(\ket{\psi})$.

\vspace{0.1in}

Now we will show $r' \leq \srank_\epsilon(\ket{\psi})$. Let $s = \srank_\epsilon(\ket{\psi})$.  Let $\ket{\theta} \in \h_A \otimes \h_B$ be a pure state such that $ |\qip{\theta}{\psi}|= \F(\ket{\psi}\bra{\psi}, \ket{\theta}\bra{\theta}) \geq 1 - \epsilon$ and $\srank(\ket{\theta}) = s$.  Without loss of generality (by multiplying appropriate phase to $\ket{\theta}$) let us assume that $\beta \defeq \qip{\phi}{\theta}$ is real. Let  
$$\ket{\theta} = \sum_{j=1}^{s} \sqrt{q_i} \cdot \ket{v'_j} \otimes \ket{w'_j}$$
be a Schmidt decomposition of $\ket{\theta}$.  
Define
$$A \defeq \sum_{i=1}^r \sqrt{p_i} \cdot \ket{v_i}  \bra{w_i} \quad \text{  and  } \quad B \defeq \beta \cdot \sum_{i=1}^s \sqrt{q_i} \cdot \ket{v'_i}  \bra{w'_i}.$$
Note that $A = \vecinv(\ket{\psi})$ and $B = \vecinv(\ket{\theta'})$. Since $\{v_i\}$ and $\{w_i\}$ are orthonormal, $\{\sqrt{p_i}\}$ form the singular values of $A$. Similarly $\{\beta\cdot \sqrt{q_i}\}$ form the singular values of $B$. Now,
\begin{align*}
1 - (1-\epsilon)^2 & \geq  1 - \beta^2 \\
& =  \|\ket{\phi}\|^2 + \|\ket{\theta'}\|^2 - 2 \qip{\theta'}{\phi} =  \|\ket{\phi} - \ket{\theta'}\|^2 \\
& = \|\vecinv(\ket{\phi} -\ket{\theta'})\|^2_2 = \|\vecinv(\ket{\phi})  -\vecinv(\ket{\theta})\|^2_2 \\
& = \|A - B\|^2_2  . 
\end{align*}
Hence from Lemma~\ref{lem:rankl2}, $\srank(\theta) = \rank(B) \geq r'$.

\suppress{
Now,
\begin{equation*}
        1-\epsilon \leq |\qip{\psi}{\theta}| = |\tr(A^\dag B)| \leq \|A^\dag B\|_{1} = \sum_{i=1}^{\rank(A^\dag B)} \sigma_i(A^\dag B) \enspace .
    \end{equation*}
Consider   
\begin{align*}
1-\epsilon \leq \sum_{i=1}^{\rank(A^\dag B)} \sigma_i(A^\dag B) & \leq \sum_{i=1}^{\rank(A^\dag B)} \sigma_i(A^\dag) \sigma_i(B)  & (\text{from Theorem \ref{thm:SV-majorize}}) \\
& = \sum_{i=1}^s \sigma_i(A^\dag) \sigma_i(B) & (\text{since } \rank(B) = s) \\
& = \sum_{i=1}^s  \sqrt{p_i} \sqrt{q_i}   \\
& \leq \sqrt{\sum_{i=1}^s p_i}\sqrt{\sum_{i=1}^s q_i} & (\text{Cauchy-Schwartz inequality})\\
& = \sqrt{\sum_{i=1}^s p_i} \enspace .
\end{align*}
Hence $s \geq r'$. \qedhere
\suppress{Let $\{ \ket{u_i} ~|~ i \in [\dim(\h_A \otimes \h_B)]\}$ be an  orthonormal bases for $\h_A \otimes \h_B$ such that the first $r$ vectors in this set coincide with $\frac{1}{ \sqrt{\qip{v_i}{v_i} \cdot \qip{w_i}{w_i}}} \ket{v_i} \otimes \ket{w_i}$. 
Let  $\ket{\theta} = \sum_{j=1}^{s} \ket{v'_j} \otimes \ket{w'_j}$ be a Schmidt decomposition of $\ket{\theta}$. Then,
\begin{align*}
1-\epsilon & \leq |\qip{\theta}{\psi}|  = |\sum_{i=1}^r \sum_{j=1}^s \qip{v'_j}{v_i} \cdot \qip{w'_j}{w_i}| \\
& \leq \sqrt{\sum_{i=1}^r |\alpha_i|^2} \sqrt{\sum_{i=1}^r \qip{v_i}{v_i} \cdot \qip{w_i}{w_i}} \\
& \leq \sqrt{\sum_{i=1}^r \qip{v_i}{v_i} \cdot \qip{w_i}{w_i}}
\end{align*} 
}
}
\end{proof}

We can now get the desired characterization for  $\Q^{pure}_\epsilon(\ket{\psi}\bra{\psi}) $.
\begin{Thm} \label{thm:qpure}. 
Let $\epsilon > 0$. Let $\ket{\psi}$ be a pure state in $\h_A \otimes \h_B$ with a Schmidt decomposition $\ket{\psi} = \sum_{i=1}^r \sqrt{p_i} \cdot \ket{v_i} \otimes \ket{w_i}$ (with $p_1 \geq p_2 \geq \ldots \geq p_r > 0$ and $\sum_{i=1}^r p_i = 1$). Let $A = \sum_{i=1}^r \sqrt{p_i} \cdot \ket{v_i}  \bra{w_i} = \vecinv(\ket{\psi})$. Then,
    $\Q^{pure}_\epsilon(\ket{\psi}\bra{\psi}) = \lceil \log_2 \rank_{2\epsilon - \epsilon^2}(A)\rceil$.
\end{Thm}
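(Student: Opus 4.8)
The plan is to reduce the statement to the two preceding lemmas, Lemma~\ref{lem:qrho} and Lemma~\ref{lem:appsrank}, and then identify the resulting quantity with $\rank_{2\epsilon-\epsilon^2}(A)$ by Eckart--Young (Lemma~\ref{lem:rankl2}).

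First I would establish that for any \emph{pure} state $\ket{\phi}\in\h_A\otimes\h_B$ one has $\Q(\ket{\phi}\bra{\phi}) = \lceil\log_2 \srank(\ket{\phi})\rceil$ exactly. The bound $\leq$ is immediate from Lemma~\ref{lem:qrho} by taking the trivial purification, namely $\ket{\phi}$ itself. For the bound $\geq$, suppose $\ket{\chi}\in\h_A\otimes\h_{A_1}\otimes\h_B\otimes\h_{B_1}$ satisfies $\tr_{\h_{A_1}\otimes\h_{B_1}}\ket{\chi}\bra{\chi}=\ket{\phi}\bra{\phi}$. Since the reduced state on $\h_A\otimes\h_B$ is pure, the standard fact that a bipartite pure state with a pure marginal is a product state forces $\ket{\chi}=\ket{\phi}\otimes\ket{\xi}$ for some $\ket{\xi}\in\h_{A_1}\otimes\h_{B_1}$; hence (with respect to the bipartition $(\h_A\otimes\h_{A_1}):(\h_B\otimes\h_{B_1})$) $\srank(\ket{\chi})=\srank(\ket{\phi})\cdot\srank(\ket{\xi})\geq\srank(\ket{\phi})$, and Lemma~\ref{lem:qrho} gives the claim. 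Taking the minimum over all pure $\ket{\phi}$ with $\F(\ket{\psi}\bra{\psi},\ket{\phi}\bra{\phi})\geq 1-\epsilon$ and using monotonicity of $\lceil\log_2(\cdot)\rceil$, together with Definition~\ref{def:appsrank}, yields
$$\Q^{pure}_\epsilon(\ket{\psi}\bra{\psi}) = \lceil\log_2 \srank_\epsilon(\ket{\psi})\rceil.$$

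Next I would invoke Lemma~\ref{lem:appsrank}, which gives $\srank_\epsilon(\ket{\psi})=r'$, the least integer with $\sum_{i=1}^{r'}p_i\geq(1-\epsilon)^2$. It then remains to check $r'=\rank_{2\epsilon-\epsilon^2}(A)$. Since $\{\ket{v_i}\}$ and $\{\ket{w_i}\}$ are orthonormal, the singular values of $A=\sum_i\sqrt{p_i}\ket{v_i}\bra{w_i}$ are precisely $\sqrt{p_1}\geq\cdots\geq\sqrt{p_r}$, so $\|A\|_2^2=\sum_i p_i=1$ and $\sigma_i(A)^2=p_i$. By Lemma~\ref{lem:rankl2}, $\rank_{2\epsilon-\epsilon^2}(A)$ is the least $k$ with $\sum_{i=1}^k\sigma_i(A)^2\geq 1-(2\epsilon-\epsilon^2)$, i.e.\ $\sum_{i=1}^k p_i\geq(1-\epsilon)^2$, using the identity $1-(2\epsilon-\epsilon^2)=(1-\epsilon)^2$; this least $k$ is exactly $r'$, completing the proof.

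All the arithmetic here is routine; the only step that genuinely needs care is the first one, namely verifying that purifying a pure state cannot decrease its Schmidt rank, so that $\Q$ of a pure state is governed \emph{exactly} by the (approximate) Schmidt rank rather than merely upper bounded by it. Everything afterward is direct substitution into the two lemmas plus the identity $(1-\epsilon)^2=1-(2\epsilon-\epsilon^2)$.
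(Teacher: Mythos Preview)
Your proposal is correct and follows the same two-step route as the paper: first identify $\Q^{pure}_\epsilon(\ket{\psi}\bra{\psi})$ with $\lceil\log_2\srank_\epsilon(\ket{\psi})\rceil$ via Lemma~\ref{lem:qrho}, then identify $\srank_\epsilon(\ket{\psi})$ with $\rank_{2\epsilon-\epsilon^2}(A)$ via Lemma~\ref{lem:appsrank} and Lemma~\ref{lem:rankl2}. The paper simply asserts the first step as ``clear from the definitions and Lemma~\ref{lem:qrho}''; you have unpacked it by showing that any purification of a pure state is a tensor product, so the Schmidt rank cannot drop---this is exactly the content the paper suppresses, and your monotonicity remark for $\lceil\log_2(\cdot)\rceil$ correctly justifies pulling the minimum inside.
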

\begin{proof} From the definitions and Lemma~\ref{lem:qrho}  it is clear that $\Q^{pure}_\epsilon(\ket{\psi}\bra{\psi})  = \lceil \log_2 \srank_\epsilon(\ket{\psi}) \rceil$. Also from Lemma~\ref{lem:rankl2} and Lemma~\ref{lem:appsrank} it follows that $\srank_\epsilon(\ket{\psi}) = \rank_{2\epsilon - \epsilon^2}(A)$ (by noting that $\{\sqrt{p_i}\}$ form singular values of $A$).
 \end{proof}
\suppress{
\begin{proof}
    We shall show that to generate $\ket{\psi}$, it is sufficient and necessary to use the large eigenvalue part of the matrix $A$. To be more precise, suppose $A = \sum_i \sigma_i \ket{u_i} \bra{v_i}$ is its singular value decomposition, with $\sigma_1 \geq \cdots \geq \sigma_n$. Let $k$ be the minimum value \st $\sum_{i=1}^k \sigma_i^2 \geq (1-\epsilon)^2$. We have that $k = \rank_{2\epsilon - \epsilon^2}(A)$ by Lemma \ref{lem:rankl2} and the simple calculation $\sigma_{k+1}(A)^2 + \cdots + \sigma_n(A)^2 \leq 1-(1-\epsilon)^2 = 2\epsilon-\epsilon^2$. Now we can show the conclusion as follows.

    \emph{Upper bound}: \alice and \bob share the state  \[\ket{seed} = \frac{\sum_{i=1}^k\sigma_i\ket{i} \ket{i}}{\sqrt{\sum_{i=1}^k \sigma_i^2}}.\]
    Then \alice rotates $\ket{i}$ to $\ket{u_i}$, and \bob rotates $\ket{i}$ to $\ket{v_i}$, resulting in the state $\ket{\psi'} = \frac{\sum_{i=1}^k\sigma_i\ket{u_i} \ket{v_i}}{\sqrt{\sum_{i=1}^k \sigma_i^2}}$. It is easily seen that
    \[\qip{\psi}{\psi'} = \frac{\sum_{i=1}^k \sigma_i^2}{\sqrt{\sum_{i=1}^k \sigma_i^2}} = \sqrt{\sum_{i=1}^k \sigma_i^2} \geq 1-\epsilon.\]

    \emph{Lower bound}: For any $\ket{\psi'} = \sum_{x,y} b_{x,y}\ket{x,y}$ with $|\qip{\psi}{\psi'}| \geq 1-\epsilon$, let $B = [b_{x,y}]$ and we have
    \begin{equation}
        1-\epsilon \leq |\qip{\psi}{\psi'}| = |\tr(A^\dag B)| \leq \|A^\dag B\|_{tr} = \sum_{i=1}^n \sigma_i(A^\dag B)
    \end{equation}
    By Theorem \ref{thm:SV-majorize}, $\sigma(A^\dag B)$, the sequence of singular values (in the non-increasing order) of $A^\dag B$ is weakly majorized by the entry-wise product of $\sigma(A^\dag)$ and that of $\sigma(B)$. Therefore by the definition of weak majorization, it holds that

\begin{align}
\sum_{i=1}^n \sigma_i(A^\dag B) & \leq \sum_{i=1}^n \sigma_i(A^\dag) \sigma_i(B) \\
& = \sum_{i=1}^k \sigma_i(A^\dag) \sigma_i(B) & (\text{since } \rank(B) = k) \\
& \leq \sqrt{\sum_{i=1}^k \sigma_i(A^\dag)^2}\sqrt{\sum_{i=1}^k \sigma_i(B)^2} & (\text{Cauchy-Schwartz Inequality})\\
& = \sqrt{\sum_{i=1}^k \sigma_i(A^\dag)^2} & (\|B\|_2 = 1 \text{ for pure state } \ket{\psi})
\end{align}
Thus $\sum_{i=1}^k \sigma_i(A)^2 \geq (1-\epsilon)^2$.
\end{proof}
}

The following lemma shows a monotonicity property for the approximate Schmidt rank.
\begin{Lem} \label{lem:srankmonotone}
Let $\epsilon >0$. Let $\ket{\psi}$ be a pure state in $\h_A \otimes \h_B$. Let $\ket{\theta}$ be a pure state in $\h_{A_1} \otimes \h_{B_1}$. Then,
    $$ \srank_\epsilon(\ket{\psi} \otimes \ket{\theta} ) \geq \srank_\epsilon(\ket{\psi}) .$$
Hence from Lemma~\ref{lem:qrho},
$$\Q^{pure}_\epsilon(\ket{\psi}\bra{\psi} \otimes \ket{\theta}\bra{\theta} ) \geq \Q^{pure}_\epsilon(\ket{\psi}\bra{\psi})  .$$
\end{Lem}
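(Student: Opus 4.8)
The plan is to reduce the statement to the combinatorial characterization of the approximate Schmidt rank proved in Lemma~\ref{lem:appsrank}. Write the Schmidt decomposition of $\ket{\psi}$ as $\sum_i \sqrt{p_i}\,\ket{v_i}\otimes\ket{w_i}$ with $p_1\ge p_2\ge\cdots>0$ and $\sum_i p_i=1$, and that of $\ket{\theta}$ as $\sum_j \sqrt{q_j}\,\ket{v'_j}\otimes\ket{w'_j}$ with $q_1\ge q_2\ge\cdots>0$ and $\sum_j q_j=1$. With respect to the bipartition that groups $\h_A\otimes\h_{A_1}$ on \alice's side and $\h_B\otimes\h_{B_1}$ on \bob's side (the bipartition relevant to correlation complexity), the product $\ket{\psi}\otimes\ket{\theta}$ has Schmidt decomposition $\sum_{i,j}\sqrt{p_i q_j}\,(\ket{v_i}\otimes\ket{v'_j})\otimes(\ket{w_i}\otimes\ket{w'_j})$, since $\{\ket{v_i}\otimes\ket{v'_j}\}_{i,j}$ and $\{\ket{w_i}\otimes\ket{w'_j}\}_{i,j}$ are each orthonormal. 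Hence the squared Schmidt coefficients of $\ket{\psi}\otimes\ket{\theta}$ form the multiset $\{p_i q_j\}_{i,j}$.

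Let $k\defeq\srank_\epsilon(\ket{\psi})$. By Lemma~\ref{lem:appsrank}, $\sum_{i=1}^{k-1}p_i<(1-\epsilon)^2\le\sum_{i=1}^{k}p_i$, and, applied to $\ket{\psi}\otimes\ket{\theta}$, the quantity $\srank_\epsilon(\ket{\psi}\otimes\ket{\theta})$ equals the least $N$ for which the $N$ largest entries of $\{p_i q_j\}$ sum to at least $(1-\epsilon)^2$. So it suffices to show that every $S\subseteq[r]\times[s]$ with $|S|\le k-1$ satisfies $\sum_{(i,j)\in S}p_i q_j<(1-\epsilon)^2$, which forces $\srank_\epsilon(\ket{\psi}\otimes\ket{\theta})\ge k$.

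The heart of the argument is the counting bound: for any $S$ with $|S|=m$, one has $\sum_{(i,j)\in S}p_i q_j\le\sum_{i=1}^{m}p_i$. To see this, group the sum by its first coordinate, $\sum_{(i,j)\in S}p_i q_j=\sum_i p_i\big(\sum_{j:(i,j)\in S}q_j\big)$. Each inner sum is at most $\sum_j q_j=1$ and is zero unless some pair of $S$ has first coordinate $i$; moreover at most $|S|=m$ distinct first coordinates occur in $S$. Since $p_1\ge p_2\ge\cdots$, the sum of $p_i$ over any set of at most $m$ indices is at most $\sum_{i=1}^m p_i$, which gives the bound. Taking $m=k-1$ and invoking $\sum_{i=1}^{k-1}p_i<(1-\epsilon)^2$ yields the first displayed inequality. (When $k=1$ the relevant sum is empty and the inequality $0<(1-\epsilon)^2$ is the standing assumption $\epsilon<1$.) The second displayed inequality then follows by applying $\lceil\log_2(\cdot)\rceil$ to both sides, using the identity $\Q^{pure}_\epsilon(\ket{\psi}\bra{\psi})=\lceil\log_2\srank_\epsilon(\ket{\psi})\rceil$ (and likewise for $\ket{\psi}\otimes\ket{\theta}$) established in the proof of Theorem~\ref{thm:qpure} via Lemma~\ref{lem:qrho}, together with the monotonicity of $x\mapsto\lceil\log_2 x\rceil$.

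There is no serious obstacle here; the only points requiring care are the regrouping of the four tensor factors into \alice's and \bob's halves so that the product's Schmidt spectrum is exactly $\{p_i q_j\}$, and the bookkeeping in the counting bound (that each chosen pair can "activate" at most one new row index of the $p$-spectrum). Everything else is a direct appeal to Lemma~\ref{lem:appsrank} and Lemma~\ref{lem:qrho}.
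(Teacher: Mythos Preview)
Your proof is correct and follows essentially the same route as the paper: both arguments write out the Schmidt spectrum $\{p_iq_j\}$ of the tensor product, invoke Lemma~\ref{lem:appsrank}, and prove the same counting bound $\sum_{(i,j)\in S}p_iq_j\le\sum_{i=1}^{|S|}p_i$ by projecting $S$ onto its first coordinate and bounding each inner sum $\sum_{j:(i,j)\in S}q_j$ by $1$. If anything, you are slightly more careful: you use the threshold $(1-\epsilon)^2$ from Lemma~\ref{lem:appsrank}, whereas the paper's written proof uses $1-\epsilon$ (evidently a typo, since the counting argument is independent of the threshold value).
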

\begin{proof}
Let  $\ket{\psi} = \sum_{i=1}^{r} \sqrt{p_i} \cdot \ket{u_i^1} \otimes \ket{v_i^1}$ (with $p_1 \geq \cdots \geq p_r > 0$) and $\ket{\theta} = \sum_{i=1}^{s} \sqrt{q_i} \cdot \ket{u_i^2} \otimes \ket{v_i^2}$ be some Schmidt decompositions of  $\ket{\psi}$ and $\ket{\theta}$ respectively. Then
\begin{equation*}
    \ket{\psi} \otimes \ket{\theta} = \sum_{i,j} \sqrt{p_iq_j} \cdot \ket{u_i^1} \otimes \ket{u_i^2}\otimes \ket{v_i^1} \otimes \ket{v_i^2}.
\end{equation*}
Fix a minimal set $S\subseteq [r]\times [s]$ with $\sum_{(i,j)\in S} p_iq_j \geq 1-\epsilon$. Let $r' \defeq \srank_\epsilon(\ket{\psi}) $. We will show $|S| \geq r'$. Assume for contradiction $|S| \leq r'-1$. Let $S_1 = \{i ~|~ \exists j \text{ such that } (i,j)\in S\}$, then $|S_1| \leq |S| \leq r'-1$. We have
\begin{align*}
    \sum_{(i,j)\in S} p_iq_j & \leq \sum_{i\in S_1} p_i \leq p_1 + \cdots + p_{|S_1|} < 1-\epsilon,
\end{align*}
where the first inequality is because $\sum_{j: (i,j)\in S} q_j \leq 1$ for all $i$, the second inequality is because $p_i$'s are in the non-increasing order, and the last one is by the definition of $\srank_\epsilon(\ket{\psi}) = r'$, the smallest number such that $p_1+\cdots+p_{r'} \geq 1-\epsilon$ (from Lemma~\ref{lem:appsrank}). This contradicts the way we picked $S$ and hence
\begin{equation*}
    \srank_\epsilon(\ket{\psi} \otimes \ket{\theta}) = |S| \geq r' = \srank_\epsilon(\ket{\psi}). \qedhere
\end{equation*}
\end{proof}

\begin{Thm} \label{thm:pure} Let $\epsilon>0$. Let $\ket{\psi}$ be a pure state in $\h_A \otimes \h_B$.  Then,
    $\Q_\epsilon(\ket{\psi}\bra{\psi}) = \Q^{pure}_\epsilon(\ket{\psi}\bra{\psi})$.
\end{Thm}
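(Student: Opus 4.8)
The inequality $\Q_\epsilon(\ket{\psi}\bra{\psi}) \le \Q^{pure}_\epsilon(\ket{\psi}\bra{\psi})$ is immediate from the definitions, since every pure state $\ket{\phi}\bra{\phi}$ that is a valid approximant in the definition of $\Q^{pure}_\epsilon$ is also a valid (mixed) approximant in the definition of $\Q_\epsilon$. So the whole content is the reverse inequality: any mixed state $\rho'$ with $\F(\ket{\psi}\bra{\psi},\rho')\ge 1-\epsilon$ and $\Q(\rho')$ small can be replaced by a \emph{pure} state achieving the same fidelity bound and no larger correlation complexity.

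First I would unpack $\Q(\rho')$ via Lemma~\ref{lem:qrho}: there exist ancilla spaces $\h_{A_1},\h_{B_1}$ and a pure state $\ket{\Theta}\in\h_A\otimes\h_{A_1}\otimes\h_B\otimes\h_{B_1}$ with $\rho' = \tr_{\h_{A_1}\otimes\h_{B_1}}\ket{\Theta}\bra{\Theta}$ and $\Q(\rho') = \lceil\log_2\srank_{AB|A_1B_1}(\ket{\Theta})\rceil$, where the Schmidt rank is taken across the cut $(A A_1\,|\,B B_1)$ — wait, more precisely, $\ket{\Theta}$ is viewed as a bipartite state between Alice's side $\h_A\otimes\h_{A_1}$ and Bob's side $\h_B\otimes\h_{B_1}$, and $\srank(\ket{\Theta})$ is the Schmidt rank across that cut. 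The idea is to find a pure state $\ket{\phi}\in\h_A\otimes\h_B$ with $\F(\ket{\psi}\bra{\psi},\ket{\phi}\bra{\phi})\ge 1-\epsilon$ and $\srank_{A|B}(\ket{\phi})\le\srank_{AA_1|BB_1}(\ket{\Theta})$; then Lemma~\ref{lem:qrho} applied to $\ket{\phi}$ (with trivial ancilla, or directly) gives $\Q^{pure}_\epsilon(\ket{\psi}\bra{\psi})\le\lceil\log_2\srank(\ket{\phi})\rceil\le\Q(\rho')$, and minimizing over $\rho'$ finishes the proof.

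The natural candidate for $\ket{\phi}$ comes from fidelity and Uhlmann's theorem (Fact~\ref{fact:uhlmann}): since $\F(\ket{\psi}\bra{\psi},\rho')\ge 1-\epsilon$ and $\rho' = \tr_{A_1 B_1}\ket{\Theta}\bra{\Theta}$, by Uhlmann there is a purification of $\ket{\psi}\bra{\psi}$ — but $\ket{\psi}\bra{\psi}$ is already pure, so $|\qip{\Theta}{\psi\otimes\xi}|\ge 1-\epsilon$ for a suitable fixed ancilla state $\ket{\xi}\in\h_{A_1}\otimes\h_{B_1}$, i.e. $\ket{\psi}\otimes\ket{\xi}$ has overlap at least $1-\epsilon$ with $\ket{\Theta}$. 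Now I want to "compress" $\ket{\Theta}$ down to $\h_A\otimes\h_B$. The key step: let $k = \srank_{AA_1|BB_1}(\ket{\Theta})$ and take a Schmidt decomposition $\ket{\Theta} = \sum_{i=1}^k\sqrt{\lambda_i}\,\ket{a_i}\otimes\ket{b_i}$ across the Alice/Bob cut, with $\ket{a_i}\in\h_A\otimes\h_{A_1}$, $\ket{b_i}\in\h_B\otimes\h_{B_1}$. The reduced state $\rho' = \sum_i\lambda_i\,\tr_{A_1}(\ket{a_i}\bra{a_j})\dots$ — actually I think the cleanest route is: $\rho'$ has rank at most $k$ (its eigenvalues are the $\lambda_i$ after tracing? no — tracing out $A_1B_1$ can only decrease rank, and $\tr_{A_1 B_1}\ket{\Theta}\bra{\Theta}$ has rank $\le\srank_{AA_1|BB_1}(\ket{\Theta}) = k$, since $\rho'$ is supported on $\mathrm{span}\{\tr_{A_1}\ket{a_i}\}\otimes\mathrm{span}\{\tr_{B_1}\ket{b_i}\}$, hmm, need the reduced operators). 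Let me instead argue directly: $\rho'$ lives in $\h_A\otimes\h_B$ and $\rank(\rho')\le k$ because $\ket{\Theta}$, reshaped as an operator from $\h_B\otimes\h_{B_1}$ to $\h_A\otimes\h_{A_1}$, has rank $k$, so its "compression" giving $\rho'$ has rank $\le k$. Then diagonalize $\rho' = \sum_{i=1}^k\mu_i\ket{e_i}\bra{f_i}$... — the right statement is: there is a pure state $\ket{\phi}\in\h_A\otimes\h_B$ with $\srank_{A|B}(\ket{\phi})\le\rank(\rho')\le k$ and $\bra{\phi}\rho'\ket{\phi}$ as large as possible; by choosing $\ket{\phi}$ to be (a phase-adjusted version of) the principal component or, better, using Uhlmann once more on the cut $A|B$: since $\sqrt{\bra{\psi}\rho'\ket{\psi}} = \F(\ket{\psi}\bra{\psi},\rho')\ge 1-\epsilon$ and $\rho'$ is a density operator, a spectral decomposition $\rho' = \sum_i\mu_i\ket{g_i}\bra{g_i}$ gives $\sum_i\mu_i|\qip{g_i}{\psi}|^2\ge(1-\epsilon)^2$, so some eigenvector $\ket{g_j}$ has $|\qip{g_j}{\psi}|^2\ge(1-\epsilon)^2$... no, that's false in general (a convex combination exceeding a value doesn't force a single term to). So the single-eigenvector approach fails; the correct fix is to take the purification $\ket{\phi}$ of $\rho'$ \emph{within $\h_A\otimes\h_B$ itself treated appropriately}, or rather: pick $\ket{\phi}$ to be the state in $\h_A\otimes\h_B$ maximizing $|\qip{\phi}{\psi}|$ subject to $\srank(\ket{\phi})\le\rank(\rho')$ — and then bound this maximum from below by $1-\epsilon$ using that $\rho'$ itself is a feasible "mixed" competitor of the same rank, invoking Theorem~\ref{thm:qpure} (which already handles mixed-vs-pure for the \emph{rank-truncation} quantity $\rank_{2\epsilon-\epsilon^2}$) — i.e. $\srank_\epsilon(\ket{\psi})\le\rank(\rho')\le k = \srank(\ket{\Theta})$, the first inequality being exactly the kind of statement one proves by the Eckart–Young argument of Lemma~\ref{lem:appsrank}/Theorem~\ref{thm:qpure} applied to $A = \vecinv(\ket{\psi})$ together with the observation that $\rho'$ of rank $\le k$ being $(1-\epsilon)^2$-close (in the appropriate fidelity-squared sense) to $\ket{\psi}\bra{\psi}$ forces $\sum_{i=1}^k p_i\ge(1-\epsilon)^2$.

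The main obstacle — and the crux of the argument — is precisely this last step: showing that \emph{no mixed rank-$k$ state} can have fidelity $\ge 1-\epsilon$ with $\ket{\psi}\bra{\psi}$ unless the top $k$ Schmidt coefficients of $\ket{\psi}$ already capture mass $\ge(1-\epsilon)^2$. I would prove it by writing $\rho' = \sum_{j}\mu_j\ket{\phi_j}\bra{\phi_j}$ with each $\ket{\phi_j}$ of Schmidt rank $\le k$ (possible since $\rank(\rho')\le k$ — actually one needs each $\ket{\phi_j}$ supported on the same rank-$k$ row/column spaces, which holds since $\rho'$ is), so that $\bra{\psi}\rho'\ket{\psi} = \sum_j\mu_j|\qip{\phi_j}{\psi}|^2$, and each $|\qip{\phi_j}{\psi}|^2\le\max\{|\qip{\phi}{\psi}|^2:\srank(\ket{\phi})\le k\}$. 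Then by the Cauchy–Schwarz / von Neumann trace argument (or directly from $\srank_\epsilon$ via Lemma~\ref{lem:appsrank}), that maximum equals $\sum_{i=1}^k p_i$ — the sum of the top $k$ Schmidt weights of $\ket{\psi}$ — so $(1-\epsilon)^2\le\sum_j\mu_j\sum_{i=1}^k p_i = \sum_{i=1}^k p_i$, giving $k\ge r' = \srank_\epsilon(\ket{\psi})$, hence $\Q(\rho')\ge\lceil\log_2\srank_\epsilon(\ket{\psi})\rceil = \Q^{pure}_\epsilon(\ket{\psi}\bra{\psi})$ by Theorem~\ref{thm:qpure} and Lemma~\ref{lem:qrho}. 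Taking the minimum over all valid $\rho'$ yields $\Q_\epsilon(\ket{\psi}\bra{\psi})\ge\Q^{pure}_\epsilon(\ket{\psi}\bra{\psi})$, and combined with the trivial reverse inequality completes the proof.
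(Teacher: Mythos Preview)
Your argument has a real gap at the step where you claim $\rank(\rho')\le k$, where $k$ is the Schmidt rank of $\ket{\Theta}$ across the Alice/Bob cut $(AA_1\,|\,BB_1)$. This is false: the rank of $\rho'=\tr_{\h_{A_1}\otimes\h_{B_1}}\ket{\Theta}\bra{\Theta}$ equals the Schmidt rank of $\ket{\Theta}$ across the \emph{other} cut $(AB\,|\,A_1B_1)$, and these two can differ arbitrarily. For instance, $\ket{\Theta}=\ket{0}_A\otimes\ket{0}_{A_1}\otimes d^{-1/2}\sum_{i=1}^d\ket{i}_B\otimes\ket{i}_{B_1}$ has Schmidt rank $1$ across $(AA_1\,|\,BB_1)$ but gives $\rho'=\ket{0}\bra{0}\otimes I_B/d$ of rank $d$. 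Moreover, even if $\rank(\rho')\le k$ did hold, it would not follow that the eigenvectors of $\rho'$ have Schmidt rank $\le k$ across $A|B$: a rank-one $\rho'$ can be a maximally entangled pure state. So the parenthetical ``possible since $\rank(\rho')\le k$'' is doubly wrong.

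The fix is to bound the \emph{marginals} instead: $\tr_{\h_B}\rho'=\tr_{\h_{A_1}}\big(\tr_{\h_B\otimes\h_{B_1}}\ket{\Theta}\bra{\Theta}\big)$ has rank at most $\rank\big(\tr_{\h_B\otimes\h_{B_1}}\ket{\Theta}\bra{\Theta}\big)=k$, and likewise $\rank(\tr_{\h_A}\rho')\le k$. Hence $\rho'$ is supported on a subspace $V_A\otimes V_B$ with $\dim V_A,\dim V_B\le k$, so every eigenvector $\ket{\phi_j}$ of $\rho'$ indeed has Schmidt rank $\le k$, and your convex-combination bound $(1-\epsilon)^2\le\bra{\psi}\rho'\ket{\psi}\le\sum_{i=1}^k p_i$ then goes through. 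With this correction your route is valid and genuinely different from the paper's: the paper never decomposes $\rho'$ into eigenvectors, but instead applies Uhlmann's theorem (Fact~\ref{fact:uhlmann}) to produce a purification $\ket{\psi'}=\ket{\psi}\otimes\ket{\theta}$ with $|\qip{\phi}{\psi'}|\ge 1-\epsilon$ (where $\ket{\phi}$ is the optimal purification of $\rho'$ from Lemma~\ref{lem:qrho}), and then invokes the monotonicity Lemma~\ref{lem:srankmonotone} to conclude $\Q^{pure}_\epsilon(\ket{\psi}\bra{\psi})\le\Q^{pure}_\epsilon(\ket{\psi'}\bra{\psi'})\le\Q(\ket{\phi}\bra{\phi})=\Q(\rho')$. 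The paper's path is cleaner and modular; yours avoids Lemma~\ref{lem:srankmonotone} altogether but requires the marginal-rank observation you missed.
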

\begin{proof}
By definition, we have $\Q_\epsilon(\ket{\psi}\bra{\psi}) \leq \Q^{pure}_\epsilon(\ket{\psi}\bra{\psi})$. Now consider the other direction. By the definition of $\Q_\epsilon(\ket{\psi}\bra{\psi})$,  there exists a $\rho \in \h_A \otimes \h_B$ such that 
\begin{equation} \label{eq:qrho}
\Q_\epsilon(\ket{\psi}\bra{\psi}) = \Q(\rho)  \text {  and   } \F(\rho, \ket{\psi}\bra{\psi}) \geq 1-\epsilon. 
\end{equation}
 By Lemma~\ref{lem:qrho}, there exists a purification $\ket{\phi}$ in $\h_A \otimes \h_{A_1} \otimes \h_B \otimes \h_{B_1}$ of $\rho$ with
\begin{equation} \label{eq:srank}
 \Q(\rho) = \lceil \log_2 \srank(\ket{\phi}\bra{\phi}) \rceil = \Q(\ket{\phi}\bra{\phi}) .
\end{equation}
Without loss of generality, we can assume that $\dim(\h_{A_1} \otimes \h_{B_1}) \geq \dim(\h_{A} \otimes \h_{B})$  (otherwise we can attach $\ket{0}$ to $\ket{\phi}$ appropriately). Now by Uhlmann's Theorem, there exists a purification $\ket{\psi'} \in \h_A \otimes \h_{A_1} \otimes \h_B \otimes \h_{B_1}$ of $\ket{\psi}\bra{\psi}$ such that $|\qip{\phi}{\psi'}| \geq 1-\epsilon$. Since $\ket{\psi}$ is a pure state, $\ket{\psi'} = \ket{\psi} \otimes \ket{\theta}$ for some $\ket{\theta} \in \h_{A_1} \otimes \h_{B_1}$. Therefore, 
\begin{align*}
    \Q^{pure}_\epsilon(\ket{\psi}\bra{\psi}) & \leq \Q^{pure}_\epsilon(\ket{\psi'}\bra{\psi'})  & (\text{from Lemma~\ref{lem:srankmonotone}})\\
& \leq \Q(\ket{\phi}\bra{\phi}) & (\text{from the definition of $\Q^{pure}_\epsilon(\ket{\psi'}\bra{\psi'})$}) \\
&= \Q(\rho) & (\text{from  Eq.~\eqref{eq:srank}}) \\
& = \Q_\epsilon(\ket{\psi}\bra{\psi}) \enspace .& (\text{from Eq.~\eqref{eq:qrho}})
\end{align*}
\end{proof}

Theorem~\ref{res:pure} now follows immediately by combining Theorem~\ref{thm:qpure} and Theorem~\ref{thm:pure} and noting that the matrix $A$ as defined in the statement of Theorem~\ref{res:pure} is $\vecinv(\ket{\psi})$.
\suppress{
\section{Classical distribution generation}\label{sec:distribution}
Let us first recall the definition of psd-rank, proposed in \cite{FMP+12}.
\begin{Def}
    For a matrix $M\in \mbR_+^{n\times n}$, its psd-rank $\prank(M)$ is the minimum integer $r$ such that there are psd matrices $A_i,B_j\in \mbC^{r\times r}$ with $\tr(A_i^\dag B_j) = M_{ij}$, $\forall i,j\in [n]$.
\end{Def}
In this section we will prove Theorem \ref{thm:distribution}, namely that $\Q(P)$ is completely characterized by the psd-rank of $P$. Before giving the formal proof, let us briefly mention some naive approach for the upper bound and why it does not work. Given psd matrices $A_x$ and $B_y$ with $\tr(A_x^\dag B_y) = p_{xy}$, it is natural to attempt to let \alice and \bob use $A_x$ and $B_y$ (properly normalized) respectively as their POVM measurements. This would work if $\sum_x A_x$ and $\sum_y B_y$ are both multiples of the identity matrix $I$, which is not guaranteed in general. So we need to find out a way to adjust the eigenvalues of $\sum_x A_x$.

\medskip
\begin{proof} (of Theorem \ref{thm:distribution})
    \emph{Lower bound}: Suppose the shared state is $\rho = \mu_i \sum_{i=1}^{2^q}\ket{\psi_i}\bra{\psi_i}$, where $q = 2r = 2\qcorr(p)$ and $\ket{\psi_i}$'s are pure states with Schmidt decomposition $\ket{\psi_i} = \sum_{j=1}^{2^r} \lambda_{ij} \ket{\psi_{ij}}\otimes \ket{\phi_{ij}}$. Suppose \alice and \bob apply POVM measurements $\{E_x\}$ and $\{F_y\}$, respectively. The probability of $(x,y)$ occurs is
\begin{align}
    p(x,y) & = \sum_{ijk} \mu_i \lambda_{ij}\lambda_{ik} \bra{\psi_{ij}} E_x \ket{\psi_{ik}} \cdot \bra{\phi_{ij}} F_y \ket{\phi_{ik}} = \sum_i \mu_i \tr(A_{i,x}^\dag B_{i,y})
\end{align}
where $(A_{i,x})_{jk} = \bra{\lambda_{ij}\psi_{ij}^*} E_x \ket{\lambda_{ik}\psi_{ik}^*}$, and $(B_{i,y})_{jk} = \bra{\phi_{ij}} F_y \ket{\phi_{ik}}$. Since $A_{i,x} \succeq 0$, $B_{i,y} \succeq 0$, and these matrices are of dimension $2^r \times 2^r$, we know that the $\prank(P) \leq 2^q2^r = 2^{3r}$.

    \medskip
    \emph{Upper bound}: Following the notation in the definition of \qcorr, we index the rows by $x$ and columns by $y$ in matrix $P$. By the definition of $\prank(P)$, there are $r\times r$ psd matrices $A_x$ and $B_y$ with $\tr(A_x^\dag B_y) = p_{xy}$. Let $A = \sqrt{r} \sum_x A_x$ and $B= \sqrt{r} \sum_y B_y$. Then $A \succeq 0$ and $B \succeq 0$, and thus they have spectral decompositions:
    \begin{equation}\label{eq: decomp of A and B}
        A = \sum_{i=1}^r a_i \ket{\alpha_i}\bra{\alpha_i}, \quad B = \sum_{i=1}^r b_i \ket{\beta_i}\bra{\beta_i}
    \end{equation}
    where $a_i\geq 0$, $b_i\geq 0$, and $\|\ket{\alpha_i}\|_2 = \|\ket{\beta_i}\|_2 = 1$.

    We first claim that actually all $a_i > 0$ and $b_i > 0$, namely $A$ and $B$ are full-rank. Suppose this is not true, for example, $A \ket{\alpha_i} = 0$ for some $i$. Then by the positivity of each $A_x$, it is easily seen that each $A_x\ket{\alpha} = 0$ as well. Now for each $y$, consider the spectral decomposition of $B_y$: $B_y = \sum_j b_{y,j} \ket{\beta_{y,j}} \bra{\beta_{y,j}}$ where $b_{y,j} \geq 0$. Define
    \begin{equation}
        \ket{\beta_{y,j}'} = \ket{\beta_{y,j}} - \qip{\alpha_i}{\beta_{y,j}} \ket{\alpha_i} \quad \text{and } \quad B_y' = \sum_j b_{y,j} \ket{\beta_{y,j}'} \bra{\beta_{y,j}'},
    \end{equation}
    then $B_y'\succeq 0$ by definition, and $\bra{\alpha_i} B_y' \ket{\alpha_i} = \sum_j b_{y,j}|\qip{\alpha_i}{\beta_y'}|^2$ = 0. One can also verify that
    \begin{align}
        \tr(A_x^\dag B_y') & = \tr(A_x \sum_j b_{y,j} \ket{\beta_{y,j}'} \bra{\beta_{y,j}'}) & (A_x \succeq 0) \\
        & = \sum_j b_{y,j} \bra{\beta_{y,j}'} A_x \ket{\beta_{y,j}'} \\
        & = \sum_j b_{y,j} \bra{\beta_{y,j}} A_x \ket{\beta_{y,j}} & (A_x^\dag\ket{\alpha_i} = A_x\ket{\alpha_i} = 0)\\
        & = \tr(A_x^\dag B_y) = p_{xy}.
    \end{align}
    Thus one can use $\{A_x\}$ and $\{B_y'\}$ to construct a psd decomposition of $P$ with dimension strictly smaller than $r$, contradictory to the assumption that $r$ is the optimum in the definition of $\prank(P)$. Indeed, one can use a unitary matrix $U$ to rotate $\ket{\alpha_i}$ to $\ket{1}$, and use $\{UA_xU^\dag\}$ and $\{UB_y'U^\dag\}$ for the psd decomposition. Note that $UA_xU^\dag \succeq 0$ and $\tr(UA_x^\dag U^\dag UB_y'U^\dag) = \tr(A_x^\dag B_y') = p_{xy}$, thus $\{UA_xU^\dag\}$ and $\{UB_y'U^\dag\}$ form a valid psd decomposition, but all matrices $UA_xU^\dag$ and $UB_y'U^\dag$ have the first row and first column being all zero. So we could have used the rest $r-1$ rows and columns to get a psd decomposition for $P$ with a strictly smaller dimension.

    Now that all $a_i$ and $b_j$ are strictly positive, we can define
    \begin{equation}
        R = \sum_{i=1}^r \frac{1}{\sqrt{a_i}} \ket{i}\bra{\alpha_i}, \quad S = \sum_{i=1}^r \frac{1}{\sqrt{b_i}} \ket{i}\bra{\beta_i}
    \end{equation}
    It is easy to calculate their inverse: $R^{-1} = \sum_{i=1}^r \sqrt{a_i} \ket{\alpha_i}\bra{i}$ and $S^{-1} = \sqrt{b_i} \ket{\beta_i}\bra{i}$.

    In the protocol, we will let \alice and \bob share the state
    \begin{equation}
        \ket{\psi'} = ((R^\dag)^{-1}\otimes (S^\dag)^{-1}) \ket{\psi}, \quad \text{where } \ket{\psi} = \frac{1}{\sqrt{r}} \sum_{i=1}^r \ket{i,i}.
    \end{equation}
    First, let us verify that it is indeed a quantum state; namely its $\ell_2$ norm is 1.
    \begin{align}
        & \quad \ \|((R^\dag)^{-1}\otimes (S^\dag)^{-1}) \ket{\psi} \|^2 \\
        & = \bra{\psi} (R^{-1}\otimes S^{-1}) ((R^\dag)^{-1}\otimes (S^\dag)^{-1}) \ket{\psi} \\
        & = \bra{\psi} ((\sum_i a_i \ket{\alpha_i}\bra{\alpha_i})\otimes (\sum_i b_i \ket{\beta_i}\bra{\beta_i})) \ket{\psi} & (\text{definitions of $R$ and $S$)}\\
        & = \bra{\psi} (A\otimes B) \ket{\psi} & (\text{Eq.\eqref{eq: decomp of A and B})}\\
        & = \bra{\psi} (\sqrt{r}\sum_x A_x\otimes \sqrt{r}\sum_y B_y) \ket{\psi} & (\text{definitions of $A$ and $B$)}\\
        & = r \sum_{x,y} \bra{\psi} (A_x\otimes B_y) \ket{\psi} \\
        & = \sum_{x,y} \sum_{i,j}\bra{i} A_x \ket{j} \bra{i} B_y \ket{j} & (\text{definition of } \ket{\psi})\\
        & = \sum_{x,y} \tr(A_x^\dag B_y) = \sum_{x,y}p_{xy} = 1
    \end{align}

    \alice uses the measurement $\{E_x\}$ and \bob uses $\{F_y\}$, where $E_x = \sqrt{r} R A_x R^\dag$ and $F_y = \sqrt{r} S B_y S^\dag$. They are indeed POVM measurements because it is not hard to verify that $A_x \succeq 0$ implies $R A_x R^\dag \succeq 0$, and
    \begin{equation}
        \sum_x \sqrt{r} R A_x R^\dag = R A R^\dag = \big(\sum_{i} \frac{1}{\sqrt{a_i}} \ket{i}\bra{\alpha_i}\big) \big(\sum_{i} a_i \ket{\alpha_i}\bra{\alpha_i}\big) \big(\sum_{i} \frac{1}{\sqrt{a_i}} \ket{\alpha_i}\bra{i}\big) = I
    \end{equation}
    Similar arguments hold for $\{F_y\}$. Finally, we need to check that the probability of $(x,y)$ occurs in the measurement is
    \begin{align}
        \pr[x,y] & = \bra{\psi'}(E_x\otimes F_y) \ket{\psi'} \\
        & = \bra{\psi} (R^{-1}\otimes S^{-1}) (\sqrt{r} R A_x R^\dag \otimes \sqrt{r} S B_y S^\dag) ((R^\dag)^{-1}\otimes (S^\dag)^{-1}) \ket{\psi} \\
        & = r\cdot \bra{\psi} (A_x\otimes B_y) \ket{\psi} = p_{xy},
    \end{align}
    as desired.
\end{proof}

If we define $\prank_\epsilon(P)$ as the minimum $\prank(P')$ for a nonnegative matrix $P'$ with $\|P-P'\| \leq \epsilon$ for some matrix norm $\|\cdot \|$, then an immediate corollary is that the complexity $\Q_\epsilon(P)$ of generating an $\epsilon$-approximation with respect to norm $\|\cdot \|$ is just $\log \prank_\epsilon(P)$.
\begin{Cor}
    $\Q_\epsilon(P) = \Theta(\log \prank_\epsilon(P))$.
\end{Cor}
}

\section{Correlation complexity of  a quantum state}
\label{sec:qrho}
In this section we show characterizations of correlation complexities for general quantum states and also for classical states and prove Theorem~\ref{res:distribution} and Theorem~\ref{res:qrho}. We start with the following lemma.
\begin{Lem} \label{lem:alt}
Let $\rho$ be a quantum state  in $\h_A \otimes \h_B$.  Let $\{ \ket{x} ~|~ x \in [\dim(\h_{A})]\}$ be the computational bases for $\h_A$ and let $\{ \ket{y}~ |~ y \in [\dim(\h_{B})]\}$ be the computational bases for  $\h_B$. There exists a purification $\ket{\psi}$ of $\rho$, with $\srank(\ket{\psi}) =r$, if and only if  there exist matrices $\{A_x ~|~ x \in  [\dim(\h_{A})] \}$ and $\{B_y ~|~ y \in  [\dim(\h_{B})] \}$, each with $r$ columns, such that   
\begin{align*}
\rho = \sum_{x,x' \in [\dim(\h_{A})] \atop ~y,y' \in [\dim(\h_{B})]} \ket{x}\bra{x'} \otimes \ket{y}\bra{y'} \cdot \tr \left((A_{x'}^\dag A_x)^T (B_{y'}^\dag B_y)\right)   \enspace .
\end{align*}
\end{Lem}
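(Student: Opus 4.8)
The plan is to prove both directions by passing through the $\vecinv$ map and expressing the purification in terms of an auxiliary register. Given a purification $\ket{\psi} \in \h_A \otimes \h_B \otimes \h_{A_1} \otimes \h_{B_1}$ of $\rho$, write it in the computational bases of $\h_A$ and $\h_B$ as $\ket{\psi} = \sum_{x,y} \ket{x} \otimes \ket{y} \otimes \ket{\mu_{x,y}}$ where $\ket{\mu_{x,y}} \in \h_{A_1} \otimes \h_{B_1}$. Tracing out $\h_{A_1} \otimes \h_{B_1}$ gives $\rho = \sum_{x,x',y,y'} \ket{x}\bra{x'} \otimes \ket{y}\bra{y'} \cdot \qip{\mu_{x',y'}}{\mu_{x,y}}$. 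So I need to show that $\srank(\ket{\psi}) = r$ is equivalent to being able to write $\qip{\mu_{x',y'}}{\mu_{x,y}} = \tr\big((A_{x'}^\dag A_x)^T (B_{y'}^\dag B_y)\big)$ for matrices $A_x, B_y$ with $r$ columns.

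For the forward direction, I would use the Schmidt decomposition $\ket{\psi} = \sum_{i=1}^r \ket{\alpha_i} \otimes \ket{\beta_i}$ across the bipartition $(\h_A \otimes \h_{A_1})$ versus $(\h_B \otimes \h_{B_1})$, where $\srank(\ket{\psi}) = r$ (absorbing the Schmidt coefficients). Now expand $\ket{\alpha_i} = \sum_x \ket{x} \otimes \ket{a_{x,i}}$ with $\ket{a_{x,i}} \in \h_{A_1}$, and $\ket{\beta_i} = \sum_y \ket{y} \otimes \ket{b_{y,i}}$ with $\ket{b_{y,i}} \in \h_{B_1}$. Then $\ket{\mu_{x,y}} = \sum_{i=1}^r \ket{a_{x,i}} \otimes \ket{b_{y,i}}$, so $\qip{\mu_{x',y'}}{\mu_{x,y}} = \sum_{i,j} \qip{a_{x',j}}{a_{x,i}} \qip{b_{y',j}}{b_{y,i}}$. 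Defining $A_x$ to be the matrix whose $i$-th column is $\ket{a_{x,i}}$ (so $(A_{x'}^\dag A_x)_{ji} = \qip{a_{x',j}}{a_{x,i}}$) and $B_y$ whose $i$-th column is $\ket{b_{y,i}}$, the sum becomes $\sum_{i,j} (A_{x'}^\dag A_x)_{ji} (B_{y'}^\dag B_y)_{ji} = \tr\big((A_{x'}^\dag A_x)^T (B_{y'}^\dag B_y)\big)$, as desired; each $A_x, B_y$ has exactly $r$ columns.

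For the converse, given $A_x$ (with columns $\ket{a_{x,i}}$, $i \in [r]$, living in some space $\h_{A_1} \cong \mbC^{k}$) and $B_y$ (with columns $\ket{b_{y,i}}$ in $\h_{B_1} \cong \mbC^{m}$), I reverse the construction: set $\ket{\alpha_i} \defeq \sum_x \ket{x} \otimes \ket{a_{x,i}} \in \h_A \otimes \h_{A_1}$ and $\ket{\beta_i} \defeq \sum_y \ket{y} \otimes \ket{b_{y,i}} \in \h_B \otimes \h_{B_1}$, and define $\ket{\psi} \defeq \sum_{i=1}^r \ket{\alpha_i} \otimes \ket{\beta_i}$, with the tensor factors reordered to lie in $\h_A \otimes \h_B \otimes \h_{A_1} \otimes \h_{B_1}$. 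The trace identity above shows $\tr_{\h_{A_1}\otimes\h_{B_1}} \ket{\psi}\bra{\psi} = \rho$, so $\ket{\psi}$ is a purification of $\rho$; and since $\ket{\psi}$ is written as a sum of $r$ product terms across the $A$-versus-$B$ bipartition, $\srank(\ket{\psi}) \le r$. This gives a purification of Schmidt rank \emph{at most} $r$; to get one of Schmidt rank exactly $r$ I can pad with extra orthogonal terms of tiny (or rather, if an exact value is needed, I should note that the lemma as used only requires $\le r$, or adjust the statement — in the application via Lemma~\ref{lem:qrho} only the minimum matters, so "$\le r$" suffices and one takes the minimal such $r$).

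The main obstacle is bookkeeping the four tensor factors and the transpose: one must be careful that $(A_{x'}^\dag A_x)_{ji} = \qip{a_{x',j}}{a_{x,i}}$ uses columns indexed consistently, and that contracting $\sum_{i,j}$ of a product of two such matrices reproduces $\tr(M^T N)$ rather than $\tr(MN)$ or $\tr(M N^\dag)$ — the transpose in the statement is exactly what accounts for the fact that both inner products $\qip{a_{x',j}}{a_{x,i}}$ and $\qip{b_{y',j}}{b_{y,i}}$ run over the \emph{same} pair $(i,j)$ rather than $(i,j)$ and $(j,i)$. Everything else is a direct substitution, using Fact~\ref{fact:psd} implicitly only to note $A_{x'}^\dag A_x$ type matrices are well-defined Gram-type matrices (no positivity is actually needed here, only linearity).
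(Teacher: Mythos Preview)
Your proof is correct and follows essentially the same route as the paper's: Schmidt-decompose the purification across the $(\h_A\otimes\h_{A_1})/(\h_B\otimes\h_{B_1})$ cut, expand each Schmidt vector in the computational basis of $\h_A$ (resp.\ $\h_B$), and read off $A_x$, $B_y$ as the matrices whose columns are the resulting ancilla components. Your observation that the converse construction yields only $\srank(\ket{\psi}) \le r$ rather than $= r$ is well taken---the paper asserts equality (``It is clear that $\srank(\ket{\psi}) = r$'') without justification---and you are right that for the application via Lemma~\ref{lem:qrho} only the minimum $r$ matters, so Theorem~\ref{res:qrho} is unaffected.
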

\begin{proof}
We first show the `only if' implication. Let $\ket{\psi}$ be a purification of $\rho$ in $\h_{A} \otimes \h_{A_1} \otimes \h_B \otimes \h_{B_1}$. Let $\srank(\ket{\psi}) =r$. Consider a Schmidt decomposition of $\ket{\psi}$.
\begin{align*}
\ket{\psi}  = \sum_{i=1}^r \ket{v^i} \otimes \ket{w^i} = \sum_{i=1}^r \left(\sum_{x\in  [\dim(\h_{A})]} \ket{x} \otimes \ket{v_x^i}\right) \otimes \left(\sum_{y\in  [\dim(\h_{B})]} \ket{y} \otimes \ket{w_y^i}\right) \enspace .
\end{align*}
Above for any $i, x, y$, the vectors $\ket{v^i}, \ket{w^i}, \ket{v_x^i}, \ket{w_y^i}$ are not necessarily unit vectors. Consider 
\begin{align*}
{\rho} &= \tr_{\h_{A_1} \otimes \h_{B_1}} \ket{\psi}\bra{\psi} \\
&= \tr_{\h_{A_1} \otimes \h_{B_1}}  \left(\sum_{i=1}^r \left(\sum_x \ket{x} \otimes \ket{v_x^i}\right) \otimes \left(\sum_y \ket{y} \otimes \ket{w_y^i}\right) \right) \left( \sum_{j=1}^r \left(\sum_{x'} \bra{x'} \otimes \bra{v_{x'}^j}\right) \otimes \left(\sum_{y'} \bra{y'} \otimes \bra{w_{y'}^j}\right) \right) \\
&= \tr_{\h_{A_1} \otimes \h_{B_1}}\sum_{i,j=1}^r \left(\sum_{x,x'\in  [\dim(\h_{A})]} \ket{x}\bra{x'} \otimes \ket{v_x^i}\bra{v_{x'}^j}\right) \otimes \left(\sum_{y,y'\in  [\dim(\h_{B})]} \ket{y}\bra{y'} \otimes \ket{w_y^i}\bra{w_{y'}^j}\right) \\
&= \sum_{i,j=1}^r \left(\sum_{x,x'\in  [\dim(\h_{A})]} \ket{x}\bra{x'} \cdot \tr(\ket{v_x^i}\bra{v_{x'}^j})\right) \otimes \left(\sum_{y,y'\in  [\dim(\h_{B})]} \ket{y}\bra{y'} \cdot \tr(\ket{w_y^i}\bra{w_{y'}^j})\right) \\
&= \sum_{x,x'\in  [\dim(\h_{A})]  \atop y,y'\in  [\dim(\h_{B})]} \ket{x}\bra{x'} \otimes \ket{y}\bra{y'} \left( \sum_{i,j=1}^r \qip{v_{x'}^j}{v_x^i} \cdot \qip{w_{y'}^j}{w_y^i} \right)   \enspace .
\end{align*}
For each $x \in [\dim(\h_A)]$, let us define matrices $A_x \defeq (\ket{v_x^1}, \ket{v_x^2}, \ldots, \ket{v_x^r})$. Similarly for each $y \in [\dim(\h_B)]$, let us define matrices $B_y \defeq (\ket{w_y^1}, \ket{w_y^2}, \ldots, \ket{w_y^r})$. Then from above,
\begin{align*}
\rho =  \sum_{x,x'\in  [\dim(\h_{A})]  \atop y,y'\in  [\dim(\h_{B})]} \ket{x}\bra{x'} \otimes \ket{y}\bra{y'} \cdot \tr \left((A_{x'}^\dag A_x)^T (B_{y'}^\dag B_y) \right)  \enspace .
\end{align*}

Next we show the `if' implication. Let there exist matrices $\{A_x ~|~ x \in  [\dim(\h_{A})] \}$ and $\{B_y ~|~ y \in  [\dim(\h_{B})] \}$, each with $r$ columns, such that   
\begin{align*}
\rho = \sum_{x,x' \in [\dim(\h_{A})] \atop y,y' \in [\dim(\h_{B})]} \ket{x}\bra{x'} \otimes \ket{y}\bra{y'} \cdot \tr \left((A_{x'}^\dag A_x)^T (B_{y'}^\dag B_y) \right)  \enspace .
\end{align*}
For $i \in [r]$, let $\ket{v_x^i}$ be the $i$-th column of $A_x$ and let $\ket{w_y^i}$ be the $i$-th column of $B_y$.
Define
\begin{align*}
\ket{\psi}  \defeq  \sum_{i=1}^r \left(\sum_x \ket{x} \otimes \ket{v_x^i}\right) \otimes \left(\sum_y \ket{y} \otimes \ket{w_y^i}\right) 
\end{align*}
It is clear that $\srank(\ket{\psi}) =r$. We can check, by analogous calculations as above, that
\begin{equation*}
\rho = \tr_{\h_{A_1} \otimes \h_{B_1}} \ket{\psi}\bra{\psi} \enspace .  \qedhere
\end{equation*}
\end{proof}

By combining Lemma~\ref{lem:qrho} and Lemma~\ref{lem:alt} we immediately get Theorem~\ref{res:qrho}. We now show Theorem~\ref{res:distribution} which we restate below for convenience.
\begin{Thm}\label{thm:qdist}
Let $\{ \ket{x} ~|~ x \in [\dim(\h_{A})]\}$ be the computational bases for $\h_A$ and let $\{ \ket{y}~ |~ y \in [\dim(\h_{B})]\}$ be the computational bases for  $\h_B$. Let 
$$ \rho =   \sum_{x\in  [\dim(\h_{A})]  \atop y\in  [\dim(\h_{B})]} p_{x,y} \cdot \ket{x}\bra{x} \otimes \ket{y}\bra{y} \enspace .$$
Let $P$ be a $[\dim(\h_{A})] \times [\dim(\h_{B})]$ matrix with $P(x,y) = p_{x,y}$. Then $Q(\rho) = \lceil \log_2 \prank(P) \rceil$.
\end{Thm}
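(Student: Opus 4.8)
The plan is to combine the general characterization from Lemma~\ref{lem:qrho} and Lemma~\ref{lem:alt} (i.e. Theorem~\ref{res:qrho}) with the structure forced on the decomposition by the classicality of $\rho$, and to show that in this case the quantity $r$ appearing in Theorem~\ref{res:qrho} coincides exactly with $\prank(P)$. Both directions reduce to manipulating the formula $\rho = \sum_{x,x',y,y'} \ket{x}\bra{x'} \otimes \ket{y}\bra{y'} \cdot \tr\big((A_{x'}^\dag A_x)^T (B_{y'}^\dag B_y)\big)$ under the constraint that $\rho$ is diagonal in the computational basis.

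\medskip
\noindent\textbf{The $\prank(P) \le 2^{Q(\rho)}$ direction.} Suppose $r = 2^{Q(\rho)}$ and, by Theorem~\ref{res:qrho}, fix matrices $\{A_x\}$, $\{B_y\}$ each with $r$ columns realizing $\rho$. Reading off the diagonal entry at $(x,x,y,y)$ gives $p_{x,y} = \tr\big((A_x^\dag A_x)^T (B_y^\dag B_y)\big)$. Now set $C_x \defeq (A_x^\dag A_x)^T$ and $D_y \defeq B_y^\dag B_y$. By Fact~\ref{fact:psd}, $A_x^\dag A_x$ is positive semi-definite (it is a Gram matrix of the columns of $A_x$), hence so is its transpose $C_x$; likewise $D_y \succeq 0$. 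Both are $r\times r$. Since $\tr(C_x D_y) = p_{x,y}$ for all $x,y$, this is exactly a psd-factorization of $P$ of size $r$, so $\prank(P) \le r = 2^{Q(\rho)}$, i.e. $\lceil \log_2 \prank(P)\rceil \le Q(\rho)$.

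\medskip
\noindent\textbf{The $Q(\rho) \le \lceil \log_2 \prank(P)\rceil$ direction.} Let $r \defeq \prank(P)$ and fix $r\times r$ psd matrices $C_x, D_y$ with $\tr(C_x D_y) = p_{x,y}$. Write Cholesky-type factorizations $C_x^T = A_x^\dag A_x$ and $D_y = B_y^\dag B_y$ with $A_x, B_y$ each having $r$ columns (any square root works; the columns of $A_x$ are vectors in $\mbC^r$). Define $\ket{\psi} \defeq \sum_{i=1}^r \big(\sum_x \ket{x}\otimes\ket{v_x^i}\big)\otimes\big(\sum_y \ket{y}\otimes\ket{w_y^i}\big)$ as in the `if' part of Lemma~\ref{lem:alt}, where $\ket{v_x^i}, \ket{w_y^i}$ are the $i$-th columns of $A_x, B_y$; this has Schmidt rank $\le r$. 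By Lemma~\ref{lem:alt} its partial trace over $\h_{A_1}\otimes\h_{B_1}$ is $\sigma \defeq \sum_{x,x',y,y'} \ket{x}\bra{x'}\otimes\ket{y}\bra{y'}\cdot \tr\big((A_{x'}^\dag A_x)^T (B_{y'}^\dag B_y)\big)$. The point is that $\sigma$ need not equal $\rho$ — only the diagonal entries match, since $\sigma(x,x,y,y) = \tr(C_x D_y) = p_{x,y} = \rho(x,x,y,y)$, but the off-diagonal entries of $\sigma$ may be nonzero. To fix this, I would apply local dephasing: let \alice and \bob, after producing a purification of $\sigma$ from a seed of size $\lceil\log_2 r\rceil$ (via Lemma~\ref{lem:qrho}, since $\srank(\ket{\psi}) \le r$), each measure their registers $\h_A$, $\h_B$ in the computational basis (this is a local CPTP map requiring no extra seed), which zeroes out exactly the off-diagonal blocks $\ket{x}\bra{x'}\otimes\ket{y}\bra{y'}$ with $x\ne x'$ or $y\ne y'$ while preserving the diagonal, yielding $\rho$. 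Hence $Q(\rho) \le \lceil\log_2 r\rceil = \lceil\log_2\prank(P)\rceil$.

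\medskip
\noindent\textbf{Main obstacle.} The only subtle point is the upper bound: one must be careful that dephasing is genuinely free (it is a local operation, so it does not increase the seed size — consistent with the model in Section~\ref{sec:pre}), and that after dephasing one indeed obtains $\rho$ exactly and not merely approximately. An alternative, perhaps cleaner, route for the upper bound is to verify directly that there is a purification of $\rho$ itself of Schmidt rank $\le r$: augment the construction of $\ket{\psi}$ with an extra pair of registers holding a copy of $(x,y)$ (a "which-branch" record), which decoheres the $\ket{x}\bra{x'}\otimes\ket{y}\bra{y'}$ cross terms automatically upon tracing out, without inflating the Schmidt rank beyond $r$; then invoke Lemma~\ref{lem:qrho} directly. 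Either way, the two inequalities combine to give $Q(\rho) = \lceil\log_2\prank(P)\rceil$.
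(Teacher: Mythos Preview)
Your proposal is correct and essentially matches the paper's argument. The lower bound is identical in content: the paper also extracts the Gram matrices $C_x(j,i)=\qip{v_x^j}{v_x^i}$ and $D_y(i,j)=\qip{w_y^j}{w_y^i}$ from a minimal-Schmidt-rank purification, which is exactly your $C_x=(A_x^\dag A_x)^T$, $D_y=B_y^\dag B_y$ phrased through Theorem~\ref{res:qrho}. For the upper bound the paper takes precisely your ``alternative, perhaps cleaner route'': it builds
\[
\ket{\psi}=\sum_{i=1}^r\Big(\sum_x \ket{x}\otimes\ket{x}\otimes\ket{v_x^i}\Big)\otimes\Big(\sum_y \ket{y}\otimes\ket{y}\otimes\ket{w_y^i}\Big)
\]
with $\ket{v_x^i},\ket{w_y^i}$ the columns of $\sqrt{C_x^T},\sqrt{D_y}$, so the extra copies of $\ket{x},\ket{y}$ kill the cross terms upon partial trace and $\ket{\psi}$ is already a purification of $\rho$ with Schmidt rank $r$; your primary dephasing argument achieves the same thing via a post-processing local measurement instead of the extra registers, which is equally valid. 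One small wording issue: writing ``$r=2^{Q(\rho)}$'' is slightly off---what you use is the minimal $r$ from Theorem~\ref{res:qrho}, for which $Q(\rho)=\lceil\log_2 r\rceil$; the conclusion $\lceil\log_2\prank(P)\rceil\le Q(\rho)$ follows as you intend.
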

\begin{proof}
We will first show $\Q(\rho) \leq \lceil \log_2 \prank(P) \rceil$. Let $r=\prank(P)$.  We will exhibit a purification $\ket{\psi}$ of $\rho$ with $\srank(\ket{\psi}) =r$. This combined with Lemma~\ref{lem:qrho} will show  $Q(\rho) \leq \lceil \log_2 r \rceil $. Let $C_x,D_y\in \mbC^{r\times r}$ be positive semi-definite matrices with $\tr(C_x D_y) = P(x,y)$, $\forall x \in [\dim(\h_A)], y\in [\dim_{\h_B}]$. For $i\in [r]$, let $\ket{v_x^i}$ be the $i$-th column of $\sqrt{C_x^T}$ and let $\ket{w_y^i}$ be the $i$-th column of $\sqrt{D_y}$. Define $\ket{\psi}$ in $\h_A \otimes \h_{A} \otimes \h_{A_1} \otimes \h_B \otimes \h_B \otimes \h_{B_1}$ as follows.
$$ \ket{\psi} \defeq \sum_{i=1}^r \left(\sum_{x \in [\dim(\h_A)]} \ket{x} \otimes \ket{x} \otimes \ket{v_x^i} \right) \otimes \left(\sum_{y \in [\dim(\h_B)]} \ket{y} \otimes \ket{y} \otimes \ket{w_y^i} \right) \enspace .  $$
It is clear that $\srank(\ket{\psi}) = r$. Also,
\begin{align*}
& \quad \   \lefteqn{\tr_{\h_A \otimes \h_{A_1} \otimes \h_B \otimes \h_{B_1}} \ket{\psi} \bra{\psi} }\\
& = \sum_{x\in  [\dim{\h_{A}}] \atop y\in  [\dim{\h_{B}}]} \ket{x}\bra{x} \otimes \ket{y}\bra{y} \left( \sum_{i,j=1}^r \qip{v_{x}^j}{v_x^i} \cdot \qip{w_{y}^j}{w_y^i}\right)   \\ 
& = \sum_{x\in  [\dim{\h_{A}}] \atop y\in  [\dim{\h_{B}}]} \ket{x}\bra{x} \otimes \ket{y}\bra{y}  \cdot \tr(C_x D_y)  \quad = \quad \rho \enspace . 
\end{align*}
Note that \alice and \bob after sharing $\ket{\psi}$ can either just output their first registers or measure their first registers in their respective computational bases to obtain $\rho$.
\vspace{0.1in}

Now we will show $\Q(\rho) \geq \lceil \log_2 \prank(P) \rceil$. Let $\ket{\psi} \in \h_A \otimes \h_{A_1} \otimes \h_B \otimes \h_{B_1}$ be a purification of $\rho$ with $\srank(\ket{\psi}) = r$ and $\Q(\rho) = \lceil \log_2 r \rceil$, as guaranteed by Lemma~\ref{lem:qrho}. We will show $r \geq \prank(P)$ and this will show the desired. Let
$$ \ket{\psi} =  \sum_{i=1}^r \left(\sum_{x\in  [\dim(\h_{A})]} \ket{x} \otimes \ket{v_x^i}\right) \otimes \left(\sum_{y\in  [\dim(\h_{B})]} \ket{y} \otimes \ket{w_y^i}\right) \enspace .$$
For all $x \in [\dim{\h_A}]$, define $r \times r$ matrices $C_x$ such that $C_x(j,i) = \qip{v_x^j}{v_x^i}$ for all $i,j \in [r]$. Similarly  for all $y \in [\dim(\h_B)]$, define $r \times r$ matrices $D_y$ such that $D_y(i,j) = \qip{w_y^j}{w_y^i}$ for all $i,j \in [r]$. From Fact~\ref{fact:psd}, $C_x, D_y$ are positive semi-definite  for all $x \in [\dim(\h_A)]$ and for all $y \in [\dim(\h_B)]$.
Consider
\begin{align*}
 \rho & = \tr_{\h_{A_1} \otimes \h_{B_1}} \ket{\psi} \bra{\psi}  \\
& =  \sum_{x\in  [\dim(\h_{A})] \atop y\in  [\dim(\h_{B})]} \ket{x}\bra{x} \otimes \ket{y}\bra{y} \left( \sum_{i,j=1}^r \qip{v_{x}^j}{v_x^i} \cdot \qip{w_{y}^j}{w_y^i}\right)   \\ 
& =  \sum_{x\in  [\dim(\h_{A})] \atop y\in  [\dim(\h_{B})]} \ket{x}\bra{x} \otimes \ket{y}\bra{y} \cdot \tr (C_x D_y)  \enspace .
\end{align*}
Therefore for all $x \in [\dim(\h_A)]$ and for all $y \in [\dim(\h_B)]$ we have $p_{x,y} = P(x,y) = \tr (C_x D_y)$. Hence $\prank(P) \leq r$.
\end{proof}

\suppress{
\section{Classical complexity of approximating a distribution} \label{sec:class}
In this section we prove Theorem~\ref{res:class}. We restate it here for convenience.
\begin{Thm} Let $\beta, \delta, \gamma >0$. Let $P$ be a distribution on  $\mcX \times \mcY$. Let random variables $XY$ be distributed according to $P$ such that $X$ is distributed in $\mcX$ and $Y$ is distributed in $\mcY$. Then,
    $$\R_{6\beta + \delta + 2\gamma}(X,Y) \leq C(X:Y)/\beta + 2\log(1/\delta) + \log\log(1/\gamma). $$ 
Here $C(X:Y)$ refers to the common information of $X$ and $Y$ (Definition~\ref{def:comminf}).
\end{Thm}
\begin{proof}
    Let $W$ be a random variable, taking values in the set $\mcW$, that achieves the minimum in the definition of $C(X,Y)$. Let $Z \defeq XY$ and $\mcZ \defeq \mcX \times \mcY$. The idea is to find a small number (in terms of $C(X,Y)$) of $w_i \in \mcW$'s such that the uniform average, over $w_i$s, of the conditional distributions $(Z|w_i)$ is close to $Z$ itself. Let $Q$ be the joint distribution of $(Z,W)$. Let
    \[
    k \defeq  I(Z:W) = \sum_{(z,w) \in \mcZ \times \mcW} Q(z,w) \log \frac{Q(z|w)}{Q(z)} .
    \]
Define
$$ \good \defeq \{(z,w) \in \mcZ \times \mcW ~|~ \frac{Q(z|w)}{Q(z)} \leq 2^{k/\beta} \}.$$
By Markov's inequality, $ q \defeq Q(\good) \geq 1-\beta$.     Define distribution $Q'$ on $\mcZ \times \mcW$ as follows. 
		\[
		Q'(z,w) = \begin{cases} \frac{1}{q} Q(z,w) & \text{if } (z,w) \in \good \\ 0 & \text{otherwise}\end{cases}.
		\] 
		Then
    \begin{align}
        \|Q'-Q\|_1 & = \frac{1}{2} \sum_{(z,w)\in \mcZ \times \mcW} \big|Q'(z,w) - Q(z,w)\big| \nonumber \\
        & = \frac{1}{2}  \left(\sum_{(z,w)\in \good} Q(z,w) (\frac{1}{q}-1) + \sum_{(z,w)\notin \good} Q(z,w) \right) \nonumber \\
        & = (1-q) \leq \beta .
    \end{align}
Note that for all $(z,w) \in \good$
$$ \frac{Q'(z,w)}{Q'(z)}{Q'(w)} \leq \frac{1}{\beta} $$
    Let us sample $\{w_1, w_2, \ldots, w_n\}$, where $n \defeq O(2^{k/\beta} \cdot \log(1/\gamma) \cdot \delta^{-2})$, where each $w_i \in \mcW$ is sampled according to the marginal distribution of $Q'$ on $\mcW$. Now for all $z\in \mcZ$, $\av_{w_i\leftarrow Q'} \left[ \frac{Q'(z|w_i)}{Q'(z)}\right] = 1$ and $Q'(z|w_i) \leq Q'(z)2^{k/\beta}$. Thus we are able to apply Chernoff's bound and get
    \begin{align}\label{eq:concentration}
        & \ \pr_{\{w_i\}} \Big[\Big|\frac{p'(z|w_1)+\cdots+p'(z|w_n)}{n} - p'(z)\Big| >\delta p'(z)\Big]  \\
        = & \ \pr_{\{w_i\}} \Big[\Big|\frac{p'(z|w_1)}{p'(z)2^{k/\beta}}+\cdots+\frac{p'(z|w_n)}{p'(z)2^{k/\beta}} - \frac{n}{2^{k/\beta}}\Big| > \frac{\delta n}{2^{k/\beta}}\Big] \\
        \leq & \ e^{-\delta^2n/(3\cdot2^{k/\beta})} = \gamma.
    \end{align}
		
		Now the protocol is that \alice and \bob sample an $i\in [n]$ uniformly at random and use $w_i$ to generate $X$ and $Y$, respectively. The resulting distribution is $p''(z) = (p(z|w_1) + \cdots + p(z|w_n)) / n$. We want to show the existence of $\{w_i\}$ to make this distribution close to the target distribution $(X,Y)$. Consider to take random $w_i$ as in the above analysis. Then
    \begin{align}
        &\quad \ \av_{\{w_i\}} \left[\sum_z |p''(z) - p(z)| \right] \\
				& \leq \av_{\{w_i\}} \left[ \sum_z \Big| \frac{\sum_i p(z|w_i)}{n} - p(z) \Big|\right] \\
				& \leq \av_{\{w_i\}} \left[\sum_z \Big(\Big| \frac{\sum_i p(z|w_i)}{n} - \frac{\sum_i p'(z|w_i)}{n} \Big| + \Big|\frac{\sum_i p'(z|w_i)}{n} - p'(z)\Big| + |p'(z) - p(z) | \Big) \right]\\
				& \leq \av_{w} \left[\sum_z | p(z|w) - p'(z|w) | \right]+ \sum_z\av_{\{w_i\}} \left[\Big|\frac{\sum_i p'(z|w_i)}{n} - p'(z)\Big| \right]  + \sum_z|p'(z) - p(z) | 
		\end{align}
		The first summand is 
		\begin{align}
			\sum_{w,z} | p(z,w) - p'(z|w)p(w) |  & \leq \sum_{w,z} | p(z,w) - p'(z,w) | + \sum_{w,z} p'(z|w)| p'(w) - p(w) | \\
		& \leq 2\beta + 2\beta = 4\beta,
		\end{align}
		by the upper bound of $\|p'-p\|_1$. The second summand is at most 
		\[
			(1-\gamma) \sum_z \delta p'(z) + \gamma\cdot 2 \leq \delta+2\gamma.
		\]
		by the bound following Eq.\eqref{eq:concentration}. The third summand is at most $2\beta$ again by the upper bound of $\|p'-p\|_1$. Putting everything together, we have $\av_{\{w_i\}}\left[\|p'' - p\|_1\right] \leq 6\beta + \delta + 2\gamma$. Thus there exist $\{w_i\}$ to satisfy the bound as well. The cost of the protocol is $\log n = k/\beta + 2\log(1/\delta) + \log\log(1/\gamma)$.				
\end{proof}
}
\suppress{
\section{Classical complexity of approximate a distribution}
In this section, we give an upper bound of $\R_\epsilon(P)$, the classical complexity of approximate a distribution, in terms of the common information introduced by Wyner \cite{Wyn75}.
\begin{Def}[Wyner, \cite{Wyn75}]
    The common information $C(X,Y)$ between two discrete dependent random variables $X$ and $Y$ is the minimum value of $I(XY,W)$, where the minimum is over all auxiliary random variable $W$ s.t. $X$ and $Y$ are independent conditioned on $W$.
\end{Def}
\begin{Thm}
    $\R_{\epsilon}(X,Y) \leq 2C(X,Y)/\epsilon + O(1)$.
\end{Thm}
\begin{proof}
    Suppose $W$ is the random variable achieving the minimum in the definition of $C(X,Y)$. Use $Z$ to denote $XY$ for notational convenience. The idea is to find a small (in terms of $C(X,Y)$) number of $w_i$'s \st the average of the conditional distributions $(Z|w_i)$ is close to $Z$ itself.

    First, denote by $p$ the distribution of $(Z,W)$. Assume that
    \[
    k = C(X,Y) = I(Z,W) = \sum_{z,w} p(z,w) \log p(z|w)/p(z).
    \]
    By Markov's inequality,
    \[
    \sum\Big\{p(z,w): \frac{p(z|w)}{p(z)} \leq 2^{k/\beta}\Big\} \geq 1-\beta.
    \]
    Let the set $Good_1$ to contain those $(z,w)$ \st $p(z|w)/p(z) \leq 2^{k/\beta}$. Define new random variables $(Z',W') = (Z,W)|Good_1$, and use $p'$ as the density function for $(Z',W')$. Then
    \begin{align}
        \|(Z',W') - (Z,W)\|_1 & = \sum_{z,w} \big|\pr[Z'=z, W'=w] - \pr[Z=z,W=w]\big| \\
        & = \sum_{(z,w)\in Good_1} \pr[Z=z,W=w] (1/p(Good_1)-1) + \sum_{(z,w)\notin Good_1} p(z,w)\\
        & \leq (1/(1-\beta)-1) + \beta = (2\beta-\beta^2)/(1-\beta) \approx 2\beta
    \end{align}
    Now consider to sample from $n = O(2^{k/\beta} \cdot \log(1/\gamma) \cdot \delta^{-2})$ $w_i$'s, and use Chernoff's bound, we have that for any fixed $z$,
    \begin{align}
        & \ \pr_{\{w_i\}} \Big[\Big|\frac{p'(z|w_1)+\cdots+p'(z|w_n)}{n} - p'(z)\Big| >\delta p'(z)\Big]  \\
        = & \ \pr_{\{w_i\}} \Big[\Big|\frac{p'(z|w_1)}{p'(z)2^{k/\beta}}+\cdots+\frac{p'(z|w_n)}{p'(z)2^{k/\beta}} - \frac{n}{2^{k/\beta}}\Big| > \frac{\delta n}{2^{k/\beta}}\Big] \\
        \leq & \ e^{-\delta^2n/(3\cdot2^{k/\beta})} = \gamma
    \end{align}
    Taking average over $z$ according to $p'$, we have
    \[
    \sum_v p'(v) \pr_{\{w_i\}} \Big[\Big|\frac{p'(z|w_1)+\cdots+p'(z|w_n)}{n} - p'(z)\Big| >\delta p'(z)\Big] \leq \gamma.
    \]
    Switching the probability and summation gives
    \[
    \av_{\{w_i\}} \sum_v p'(v)\mathbf{1}\big[\big|\frac{p'(z|w_1)+\cdots+p'(z|w_n)}{n} - p'(z)\big| >\delta p'(z)\big] \leq \gamma
    \]
    where $\mathbf{1}[\cdot]$ is the indictor function (taking 1 if the event is true and 0 otherwise). Thus there exist choices of $\{w_i\}$ to make the above summation at most $\gamma$. Let the set $Good_2$ to contain those $z$ with $|(p'(z|w_1)+\cdots+p'(z|w_n))/n - p'(z)| \leq \delta p'(z)$. Now the protocol is that \alice and \bob sample an $i\in [n]$ uniformly at random and generate $X$ and $Y$, respectively, using $w_i$. The resulting distribution for $(X,Y)$ is $p''(z) = (p'(z|w_1) + \cdots + p'(z|w_n)) / n$. Note that
    \begin{align}
        \|p'' - p'\| & = \sum_z |p''(z) - p'(z)| \leq \sum_{z\in Good_2} \delta\cdot p'(z) + \sum_{z\notin Good_2} |p''(z) - p'(z)| \\
        & \leq \delta + \sum_{z\notin Good_2} p'(z) + \sum_{z\notin Good_2} p''(z) \\
        & \leq \delta + \gamma + (1-(1-\delta)(1-\gamma)) \\
        & \leq 2(\delta+\gamma)
    \end{align}
    where the bound for $\sum_{z\notin Good_2} p''(z)$ in the second last inequality is by considering the complement, \ie $z\in Good_2$, where $|p''(z) - p'(z)| \leq \delta p'(z)$.

    Therefore, with the cost of $\log(n) = C(X,Y)/\beta + 2\log(1/\delta) + \log\log(1/\gamma)+O(1)$, \alice and \bob get $Z'' = (X'',Y'')$ with the distance $\|Z'' - Z\| \leq \|Z'' - Z'\| + \|Z' - Z\| \leq 2(\beta+\delta+\gamma)$. Arranging the parameters to make $2(\beta+\delta+\gamma) = \epsilon$ gives the conclusion.
\end{proof}
}

\suppress{
\section{Open problems}\label{sec:open}
\begin{enumerate}
    \item Can we give lower bounds for the approximate psd-rank? It has an application for an open question raised in \cite{ASTS+03} (page 2, paragraph -2). For the matrix where rows and columns are indexed by $\sqrt{n}$-subsets of $[n]$, consider the distribution $P$ uniformly on disjoint subsets. \cite{ASTS+03} showed $\Q_\epsilon(P) = O(\log n \log(1/\epsilon))$, and it asked about its optimality. Our result in last section transforms this question to bounds of approximate psd-rank. Note that the mutual information is not enough: It's not hard to calculate that $I(X,Y) = \Theta(1)$.
    \item For distribution $P$, we've given a nontrivial upper bound for $\R_\epsilon(P)$, which can also be viewed as an upper bound of the \emph{approximate} nonnegative rank. Can we say something for $\Q_\epsilon(P)$ (and thus approximate positive semi-definite  rank)?
\end{enumerate}
}

\subsection*{Acknowledgments}
R.J. is supported by the internal grants of Centre for Quantum Technologies. Y.S. and S.Z. were partially supported by China Basic Research Grant 2011CBA00300 (sub-project 2011CBA00301) and 2007CB807900 (sub-project 2007CB807901). Y.S was also supported in part by US NSF (1017335). Z.W. would
like to acknowledge the WBS grant under contract no. R-710-000-007-271. S.Z. was also supported by Research Grants Council of the Hong Kong S.A.R. (Project no. CUHK419309, CUHK418710, CUHK419011).

\bibliography{QStateGen}
\bibliographystyle{plain}


\suppress{
\section{Discussions in $\Q(\rho)$}

However, the minimization is not easy to solve. We desire more explicit characterizations or bounds.

It seems that even the zero-error case is not trivial. Some simple observations are as follows. For a bipartite pure state $\ket{\psi}$, let $\rank(\ket{\psi})$ denote its Schmidt rank. 

\vspace{.5em} \textbf{Upper bound 1}: $\Q(\rho) \leq \min\{\log_2 K + \max_i \log_2 \rank(\ket{\psi_i}): \rho = \sum_{i=1}^K p_i \ket{\psi_i}\bra{\psi_i}\}$. (\alice samples $p_i$ and sends $i$ to \bob, and then they generate $\ket{\psi_i}$. This bound is a simple generalization of the pure state case.)

Note that the decomposition in the upper bound does matter. For example, the maximum mixed states of two qubits $I/4$ can be decomposed in the computational bases, giving an upper bound of $\log_2 4 = 2$. But it can also be decomposed as the four EPR pairs each with probability 1/4, then the upper bound becomes $\log_2 4 + \log_2 2 = 3$.

One may wonder about the performance of spectral decomposition. It is actually not necessarily the best. For example, if the state is
\begin{equation}
    \rho = \frac{1}{2}\cdot\frac{\ket{00}+\ket{11}}{\sqrt{2}}\frac{\bra{00}+\bra{11}}{\sqrt{2}} + \frac{1}{2}\cdot\frac{\ket{11}+\ket{22}}{\sqrt{2}}\frac{\bra{11}+\bra{22}}{\sqrt{2}},
\label{eq:eg1}
\end{equation}
then it has spectral decomposition $\rho = 1/4\ket{\psi_1}\bra{\psi_1} + 3/4\ket{\psi_2}\bra{\psi_2}$, where $\ket{\psi_1} = (\ket{22}-\ket{00})/\sqrt{2}$ and $\ket{\psi_2} = (\ket{00}+2\ket{11}+\ket{22})/\sqrt{6}$. This decomposition gives a worse upper bound than the Eq. \eqref{eq:eg1}.

\vspace{.5em} \textbf{Upper bound 2}: $\Q(\rho) \leq \min\{\Q(p): \rho = \sum_{x,y} p_{x,y} (\rho_x^A\otimes \rho_y^B)$
. (\alice and \bob samples $p_{x,y}$ and then they generate $\rho_x^A$ and $\rho_y^B$ respectively. This bound is a simple generalization of the classical distribution case, but it may also provide nothing since if $\rho$ is entangled, then it cannot be decomposed in this way.)


\vspace{.5em} \textbf{Lower bound 1}: $\Q(\rho) \geq \max\{\Q(p): p$ is obtained from some local measurement on $\rho\}$. (If they get $\rho$, then they can apply any local measurement and get a distribution $p$.)

The lower bound is tight when $\rho$ is a classical distribution or a pure quantum state. The former is trivial. The latter: Suppose $\rho = \ket{\psi}\bra{\psi}$ where $\ket{\psi} = \sum_{i=1}^r \sigma_i\ket{\psi_i^A}\ket{\psi_i^B}$ with $\sigma_i > 0$. 
\alice takes the measurement on the basis $\ket{\psi_i^A}$ and \bob on $\ket{\psi_i^B}$, then $p(i,j) = \sum_{ij}\delta_{ij}\sigma_i^2$. So $P = diag(\sigma_1^2, ..., \sigma_r^2, 0, ..., 0)$, thus $\Q(p) \geq \log_2 \rank(P) = \log_2 r = \log_2 \srank(\ket{\psi}) = \Q(\ket{\psi})$.

\vspace{.5em} \textbf{Question 1}: What's the real answer? Is it actually true that the lower bound is tight?
}

\end{document}